\newif\ifspace\spacefalse
\newif\ifsinglecolumn\singlecolumntrue
\newtheorem{thm}{Theorem} 
\newtheorem{lem}{Lemma}
\newcommand{\systype}[1]{\ensuremath{\langle #1\rangle}\xspace}
\newcommand{\br}[1]{\ensuremath{\langle #1 \rangle}\xspace}
\newcommand{\epone}{\ensuremath{\epsilon_{p_{1}}}\xspace}
\newcommand{\eptwo}{\ensuremath{\epsilon_{p_{2}}}\xspace}
\newcommand{\hb}{\ hb\xspace}
\newcommand{\acbound}{\eta \xspace}
\newcommand{\ineffective}{ineffective\xspace}
\newcommand{\uncertain}{uncertain\xspace}
\newcommand{\functional}{functional\xspace}
\title{Precision, Recall, and Sensitivity of Monitoring Partially Synchronous Distributed Systems} 
\author{ Sorrachai Yingchareonthawornchai\inst{1}, Duong Nguyen\inst{1}, Vidhya Tekken Valapil\inst{1}, Sandeep Kulkarni\inst{1}, and Murat Demirbas\inst{2}}
\institute{ Department of Computer Science and Engineering\\
Michigan State University\\
East Lansing MI 48824 \\
\email{\{yingchar, nguye476,  tekkenva, sandeep\}@cse.msu.edu}
\and
Department of Computer Science and Engineering\\
University at Buffalo, The State University of New York\\
Buffalo NY 14260-2500\\
\email{demirbas@cse.buffalo.edu}\\
}
\begin{document}
\maketitle

\begin{abstract}
Runtime verification focuses on analyzing the execution of a given program by a monitor to determine if it is likely to violate its specifications. There is often an impedance mismatch between the assumptions/model of the monitor and that of the underlying program. This constitutes problems especially for distributed systems, where the concept of current time and state are inherently uncertain. A monitor designed with asynchronous system model assumptions may cause false-positives for a program executing in a partially synchronous system: the monitor may flag a global predicate that does not actually occur in the underlying system. A monitor designed with a partially synchronous system model assumption may cause false negatives as well as false positives for a program executing in an environment where the bounds on partial synchrony differ (albeit temporarily) from the monitor model assumptions.

In this paper we analyze the effects of the impedance mismatch between the monitor and the underlying program for the detection of conjunctive predicates. We find that there is a small interval where the monitor assumptions are hypersensitive to the underlying program environment. We provide analytical derivations for this interval, and also provide simulation support for exploring the sensitivity of predicate detection to the impedance mismatch between the monitor and the program under a partially synchronous system.

\end{abstract}

\section{Introduction}
\label{sec:intro}

Runtime verification focuses on analyzing the execution of a given program by a monitor to determine if it violates its specifications. In analyzing a distributed program, the monitor needs to take into account multiple processes simultaneously to determine the possibility of violation of the specification. Unfortunately, perfect clock synchronization is unattainable for distributed systems~\cite{gradient,clockSync}, and distributed systems have an inherent uncertainty associated with the concept of current time and state~\cite{nonow}. As a result, there is often an impedance mismatch between the assumptions/model of the monitor and that of the underlying program. Even after a careful analysis of the underlying distributed system/program, the model assumptions that the monitor infers for the system/program will have errors due to uncertain communication latencies (especially over multihops over the Internet), temporal perturbations of clock synchronization (especially when different multihop clock references~\cite{ntp} are used), and faults.

In the absence of precise knowledge about events there is a potential that the
debugging/monitoring system (which we call as the monitor) would either (1) find
non-existent bugs or/and (2) miss existing bugs. While some error is
unavoidable, if we cannot characterize monitor and the underlying program/system
behavior precisely, there is no analysis to answer the effect of system
uncertainty on predicate detection/runtime verification. Our goal in this paper
is to analyze the errors caused by uncertainty of the underlying distributed
system and the impedance mismatch between the monitor and the underlying
distributed system.

To illustrate the role of the uncertainty and the impedance mismatch, consider
the example in Figure \ref{fig:effectoftimeout}. In this computation, we want to
verify that the system never reaches a state where the predicate $x > 0 \wedge y
> 0$ is true. In Figure \ref{fig:effectoftimeout} (a), it is clear that the
predicate is not true since there is a message after $x > 0$ has become false
and before $y > 0$ becomes true. In Figure \ref{fig:effectoftimeout} (b), if the
processes' clocks were perfectly synchronized the predicate is always false.
However, if it is assumed that the processes are asynchronous or can have large
clock drifts then in Figure \ref{fig:effectoftimeout} (b), the predicate is
true. In other words, if the algorithm for runtime monitoring assumes that the
system clock is perfectly synchronized but in reality it is not then in
Figure \ref{fig:effectoftimeout} (b), the result of the monitoring algorithm
will be false negative, i.e., the monitor will fail to detect that the system
(possibly) reached a state where $x > 0 \wedge y > 0$ was true. On the other
hand, if the monitoring algorithm assumes an asynchronous system but in reality,
it is synchronous (and the system may be using timeouts as implicit
communication) then in Figure \ref{fig:effectoftimeout} (b), the result of the
monitoring algorithm is false positive, i.e., the monitor incorrectly finds that
the system (possibly) reached a state where $x > 0 \wedge y > 0$ was true.

\begin{figure*}[tbhp]
\begin{minipage}{\linewidth}
\begin{center}
 \vspace{-10pt}
\subfigure[]{\includegraphics[width=0.4\columnwidth]{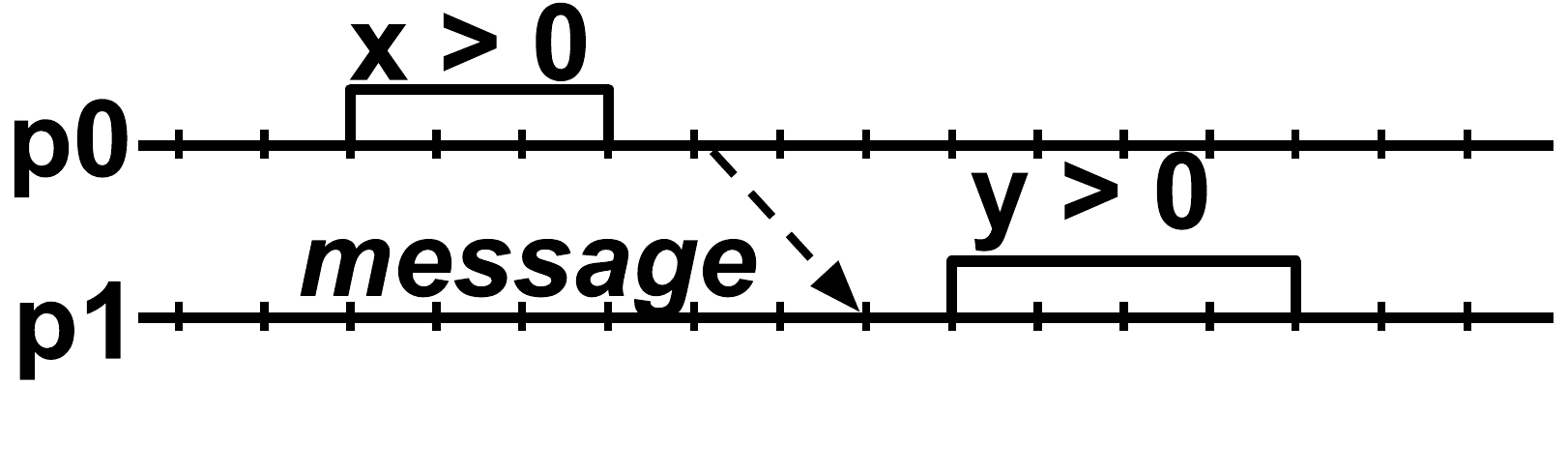}}
\subfigure[]{\includegraphics[width=0.4\columnwidth]{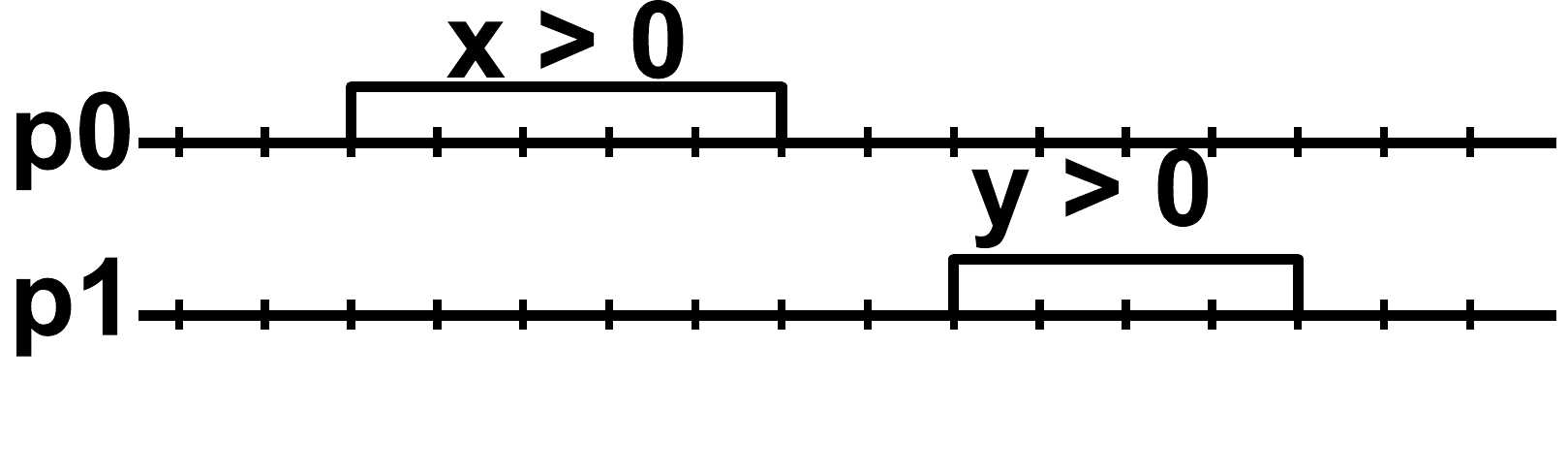}}
\end{center}
 \vspace{-10pt}
\vspace*{-4mm}

\caption{Uncertainty in Distributed Systems}
\label{fig:effectoftimeout}
\vspace{-0.2in}
\end{minipage}
\end{figure*}

%

Our goal in this work is to characterize the false positives/negatives in
run-time monitoring of a distributed system due to the uncertainty and impedance
mismatch. We focus on conjunctive predicates, i.e., predicates that are
conjunctions of local predicates of individual processes. The disjunction of
such conjunctive predicates can express any predicate in the system.
Our analysis focuses on comparing the application ground truth (whether the
predicate was true under the assumptions made by the application) with the
monitor ground truth (whether the predicate is true under the assumptions made
by the monitor). In other words, it identifies the effect of uncertainty in the
{\em problem} of monitoring distributed programs rather the uncertainty
associated with a given algorithm.

Specifically, we consider the following problems in the context of detecting
weak conjunctive predicates. (1) Suppose we utilize a monitoring algorithm
designed for asynchronous systems; then what is the likelihood of the result
being a false positive/negative when used with an application that relies on
partial clock synchronization. (2) Suppose we utilize an algorithm designed for
partially synchronous systems where it is assumed that clocks of two processes
are synchronized up to $\epsilon_{mon}$, but in reality, the bound used by
the application is $\epsilon_{app}$. In this context, what is the likelihood of
receiving false positive/negative detection? Moreover, if $\epsilon_{app}$
cannot be precisely identified (may have temporal perturbations), how
{\em sensitive} is the debugging algorithm to variations in clock drift/uncertainty?

\noindent{\em Precision, recall, and sensitivity of asynchronous monitors. } \
We present an analytical model that characterizes the false positive rate for
monitors that assume that the system is fully asynchronous (i.e.,
$\epsilon_{mon}=\infty$) and clock drift can be arbitrary ($\epsilon_{app}$ is
finite). Under these assumptions, monitor can only suffer from false positives:
The monitor will have perfect recall (i.e., there will be no false negatives)
but may suffer from a lack of precision. Our analytical results show that we
can classify the clock synchronization requirement in the partial synchrony
model into 3 categories with respect to two parameters \epone and \eptwo. We
find that if the clock drift is between $[0..\epone]$ then the precision of
monitoring is very low (i.e., the rate of false positives is high). If the drift
is in the range $[\eptwo .. \infty]$ then the precision of monitoring is
reasonably high. Moreover, in both of those cases, the precision is not very sensitive,
i.e., changes in the clock drift of the application does not affect the
rate of false positives.
However, in the range $[\epone ..\eptwo]$, the monitoring is hypersensitive and
small differences between the clock drift assumed by the monitor and the
underlying application can have a substantial impact on the rate of false
positives. A noteworthy result in this context is that the
hypersensitivity range $\frac{\eptwo-\epone}{\eptwo}$ approaches to 0
whenever $n\rightarrow \infty$.

\noindent{\em Precision, recall, and sensitivity of partially synchronous monitors. } \ %
We consider an extension of asynchronous monitors to the general case where the monitor relies on the fact that the underlying clocks are synchronized to be within $\epsilon_{mon}$, which may be different than the timing properties $\epsilon_{app}$ of the application. We find that for small $\epsilon_{app}$ there is a tradeoff among precision, recall, and sensitivity. If the monitor tries to achieve very high recall and precision (say at 95\%) at the same time, it becomes hypersensitive with respective to both precision and recall (small mismatch between the synchrony assumptions of the monitor and the underlying program can have a substantial impact on the rate of both false positives and false negatives). In this case, the monitor would need to sacrifice from the quality of either precision and recall to avoid being hypersensitive.
We also find that for large  $\epsilon_{app}$, the tradeoff dilutes. The monitor can achieve very high recall and precision while remaining less susceptible to sensitivity for large  $\epsilon_{app}$.

\noindent{\em Precision and recall of using fully synchronous monitoring for quasi-synchronous systems.} \ 
Finally we consider using fully synchronous monitors (i.e., $\epsilon_{mon} =0$) for monitoring quasi-synchronous systems. Quasi-synchronous systems are partially synchronous systems with the additional condition that if two events have the same physical clock value then they could have possibly happened at the same time. Adopting a quasi-synchronous model allows us to obviate the need for using vector clocks~\cite{fidge,mattern} and instead use inexpensive hybrid logical clocks~\cite{HLC} for predicate detection/monitoring. We investigate precision and recall tradeoffs to the face of clock drift/uncertainty in quasi-synchronous systems. For reasons of space, these results are provided in Appendix.

\vspace*{3mm}
\noindent{\em Implications of our findings for monitor design/tuning.} \ 
Our findings inform the monitor designer to manage the tradeoffs among
precision, recall, and sensitivity according to the predicate detection task at
hand. Our analytical model can inform based on $\epsilon_{app}$ and local
predicate occurrence probability, whether hypersensitivity is avoidable or not.
If hypersensitivity is avoidable, $\epsilon_{mon}$ can be chosen from the
suitable interval to achieve both high precision and high recall. However, if it
becomes necessary to make a tradeoff between precision and recall to avoid
hypersensitivity, the monitor would need to decide which one is more important,
and which one it can sacrifice.

The monitor may decide to prioritize recall in lieu of reduced
precision. In other words, the monitor can attain better coverage of
notifications of predicate detection to the expense of increased false positive
notifications. This is useful for investigating predicates that occur rarely,
where one can't afford to miss occurrences of the predicate but can afford to
investigate/debug some false-positive detections. This is also useful for
monitoring safety predicates, which is relatively easier to debug.

The monitor may decide to prioritize precision in lieu of reduced
recall. In other words, the monitor can reduce the false positive notifications
of predicate detection to the expense of allowing some missed notifications of
predicate detection. This is useful for predicates that occur frequently: the
monitor has enough opportunities to sample and can afford to miss some
occurrences of the predicate. This is also useful for monitoring
liveness/progress predicates, which is harder to debug and false-positives cause
wasting time with debugging.

{\bf Organization of the paper. } \
In Section \ref{sec:systemmodel}, we present our computational model. 
In Section \ref{sec:vc-effectiveness}, we investigate precision and sensitivity of asynchronous monitors in partially synchronous systems. 
In Section \ref{sec:hvc-eps-effectiveness}, we analyze the precision, recall, and sensitivity of partially synchronous monitoring of partially synchronous systems. 
We discuss related work in Section \ref{sec:related} and conclude in Section \ref{sec:concl}. 
Finally, in Appendix, we provide proofs of theorems in the paper and discuss additional results for monitoring quasi-synchronous systems.

\section{System Model} 
\label{sec:systemmodel}
\newcommand{\pred}{\ensuremath{\mathcal{P}}\xspace}

We consider a system that consists of a set of $n$ processes that communicate via messages. Each process has a local clock
that is synchronized to be within $\epsilon$ of absolute time, using a protocol such as NTP~\cite{ntp}. Any message sent in the system is received no earlier than $\delta_{min}$ time and no later than $\delta_{max}$ time. 
We denote such a system as $\systype{\epsilon,\delta_{min}, \delta_{max}}$-system. 
We also use the abbreviated notion of $\systype{\epsilon,\delta }$-system, where $\delta$ denotes the {\bf minimum} message delay and the maximum message delay is $\infty$. 
Observe that this modeling is generic enough to model asynchronous systems $(\epsilon = \infty, \delta_{min} = 0, \delta_{max} = \infty)$ and purely synchronous systems $(\epsilon = 0, \delta_{min} = 0, \delta_{max} = 0)$, as well as partially synchronous systems.

We define $\hb-consistent$ to capture the requirement that two events $e$ and $f$ ``could have'' happened at the same time. Specifically, $e$ and $f$ are $\hb$-consistent (also called concurrent) provided both $e \hb$ $f$ and $f \hb$ $e$ are false.
\footnote{Following Lamport's definition of causality~\cite{lamport}, for any
  two events $e$ and $f$, we say that $e \hb$ $f$ ($e$ happened before $f$) iff
  (1) $e$ and $f$ are events in the same process and $e$ occurred before $f$,
  (2) $e$ is a send event and $f$ is the corresponding receive event, and (3)
  there exists an event $g$ such that $e \hb$ $g$ and $g \hb$ $f$.}
If both $e \hb$ $f$ and $f \hb$ $e$ are false then $e$ and $f$ could have happened at the same time in an asynchronous system where clock drift could be arbitrary. A global snapshot consisting of local snapshot of each process is $\hb$-consistent iff all local snapshots are mutually $\hb$-consistent. 

For partially synchronous systems, we define the notion of $\epsilon$-consistent. Two events $e$ and $f$ are $\epsilon$-consistent provided they are $\hb$-consistent and the difference between the physical time of $e$ and $f$ is no more than $\epsilon$. 
A global snapshot consisting of local snapshot of each process is $\epsilon$-consistent iff all local snapshots are mutually $\epsilon$-consistent. 

A conjunctive predicate $\pred$ is defined of the form $\pred_1 \land \pred_2
\land \dots \land \pred_n $, where $\pred_i$ is a local predicate at process
$i$. At each process, the local predicate $\pred_i$ can be randomly and
independently truthified at the chosen time unit granularity (say millisecond
granularity) with probability $\beta$. For instance, if $\beta=0.1$ and time
unit is selected as millisecond, then the local predicate is truthified roughly
every 10 milliseconds.
We use $\ell$ to denote the length of an interval for which the local predicate remains true at a process once it is truthified. 

\ifspace

To illustrate this consider the example in Figure \ref{fig:sysmodel}. In this example, the events on process $1$ (respectively, $2$) denote states when the local predicate $(x > 0)$ (respectively, $y > 0$) is true. Furthermore assume that we are interested in checking whether predicate $x > 0 \wedge y > 0$ is possibly true in the given system execution. 
\ifspace
Observe that the snapshot consisting of events $e_1$ and $f_1$ is invalid in asynchronous systems (\systype{\infty,0}-system) since it is causally inconsistent (i.e., inconsistent based on the happened before relation by Lamport \cite{lamport}). Hence, a {\em correct} algorithm for detecting a global predicate will not identify this cut. 
\fi
The snapshot consisting of $e_2$ and $f_2$ is valid in asynchronous system. 
\ifspace
In other words, if the underlying system is asynchronous, it is possible that $x > 0 \wedge y > 0$ is {\em possibly} true.
\fi
Hence, any correct algorithm for asynchronous system should report this cut (or some other cut that has at least one common event with $e_2$ and $f_2$) . However, if the underlying system were (\systype{5,0}-system then such a cut is infeasible, i.e., a false positive.
Next, consider the same events $e_2$ and $f_2$. Suppose we use a predicate detection algorithm for a \systype{5,0}-system but the clock drift in the underlying system has increased to say 50. Then, the algorithm will not identify the cut consisting of $e_2$ and $f_2$. We denote this as false-negative.

\begin{figure}
\includegraphics[width=0.42\textwidth]{system_model_fig1.png}
\caption{Sensitivity of Debugging Algorithms}
\label{fig:sysmodel}
\end{figure}

\fi

\section{Precision and Sensitivity of Asynchronous Monitors} 
\label{sec:vc-effectiveness}

In this section, we evaluate the precision and sensitivity of an asynchronous monitor in partially synchronous systems. In particular, we focus on $\systype{\epsilon, \delta}$ systems.




\ifsinglecolumn
\subsection{Analytical Model} 

Using a monitor designed for asynchronous systems in partially synchronous systems can result in a false positive. Hence, in this section, we develop am analytical model to address the following question:
\begin{quote}
If we use a monitor for predicate detection that is designed for an asynchronous system and apply it in a partially synchronous system, what is the likelihood that it would result in a false positive?
\end{quote}

The false positive rate is defined as the probability of a snapshot discovered by the asynchronous monitor  is a false positive in $\systype{\epsilon,\delta}$-system. To compute this ratio for interval-based local predicates, we first define the followings. Two intervals $[a_1,b_1]$ and $[a_2,b_2]$ differ by $\max(\max(a_1,a_2)-\min(b_1,b_2),0)$. Let $c$ be a snapshot consisting of a collection of intervals $[a_i,b_i]$ for each process $i=0$ to $n-1$. We denote $L(c)$ as a length of snapshot defined by the least value of $x$ such that $c$ is $x$-consistent snapshot.

If an $hb$-consistent snapshot is also $\epsilon$-consistent, this is a true positive, which means the asynchronous monitor is precise in this case.
Our first result is Precision (true positive rate) of $hb$-consistent snapshots in $\systype{\epsilon,\delta}$-systems. For reasons of space, the proof is relegated to the Appendix.

\begin{thm}
For interval-based predicate, given $c$ is an $hb$-consistent snapshot, the probability of $c$  being also $\epsilon$-consistent is  $$ \phi( \epsilon,n,\beta, \ell) = (1-(1-\beta)^{\epsilon+\ell-1})^{n-1} $$
\label{thm:vcinterval}
\end{thm}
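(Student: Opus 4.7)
The plan is to reduce the conditional probability to a product over the $n-1$ non-anchor processes by identifying a canonical ``latest'' interval in the snapshot and characterizing $\epsilon$-consistency as a per-process window condition relative to that anchor. The formula $(1-(1-\beta)^{\epsilon+\ell-1})^{n-1}$ already suggests this factorization: the exponent $n-1$ points to one designated process plus $n-1$ independent ones, and the inner probability $1-(1-\beta)^{\epsilon+\ell-1}$ is exactly the chance of at least one Bernoulli$(\beta)$ start occurring in a window of $\epsilon+\ell-1$ slots.

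First, I would rewrite $\epsilon$-consistency in terms of the interval start times. Because every interval in $c$ has length $\ell$, the paper's definition of the difference $\max(\max(a_1,a_2)-\min(b_1,b_2),0)$ between $[a_1,b_1]$ and $[a_2,b_2]$ collapses to $\max(|a_i-a_j|-(\ell-1),0)$ for two intervals in $c$. Pairwise $\epsilon$-consistency of all $n$ intervals is therefore equivalent to the single spread bound $\max_i a_i-\min_i a_i \le \epsilon+\ell-1$.

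Next, I would pick the anchor to be the process $p$ whose interval starts last, $a_p=\max_i a_i$, and condition on $a_p=a$. The spread condition becomes: for every $i\neq p$, $a_i \in [a-\epsilon-\ell+1,\,a-1]$, which is a window of exactly $\epsilon+\ell-1$ slots. At this point I would invoke the modeling assumptions of the analytical framework (Section~\ref{sec:vc-effectiveness}): in steady state, predicate truthifications are independent Bernoulli$(\beta)$ at each slot on each process, and are independent across processes. Hence, for each non-anchor process, the probability of at least one interval start inside the $\epsilon+\ell-1$ window is $1-(1-\beta)^{\epsilon+\ell-1}$, and independence across the $n-1$ non-anchor processes yields the product $(1-(1-\beta)^{\epsilon+\ell-1})^{n-1}$.

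The main obstacle will be handling two subtleties cleanly. The first is ties: in a discrete Bernoulli model two processes can share the maximum start time, so the ``unique latest'' anchor convention must either be adopted by tie-breaking on process index (and shown to be consistent with the exact exponent $\epsilon+\ell-1$) or absorbed into the asymptotic steady-state approximation. The second is the conditioning on $hb$-consistency itself: $hb$-edges induced by messages could in principle correlate start times across processes, but under the analytical model's assumption of independent local predicate truthifications and long-running steady state, any snapshot formed by picking one interval per process is $hb$-consistent with probability one, so conditioning on $hb$-consistency leaves the per-process start-time distributions unchanged and the product factorization is valid. Addressing these two points is what turns the heuristic counting above into a proof of the stated identity.
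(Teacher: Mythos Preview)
Your proposal is correct and follows essentially the same route as the paper. The paper also reduces $\epsilon$-consistency to a spread bound on start times and factorizes over $n-1$ processes via a geometric window probability; the only stylistic differences are that the paper anchors on the \emph{earliest} process (setting $\min_i a_i=0$) rather than the latest, and it first establishes the point-based case $\phi(\epsilon,n,\beta)=(1-(1-\beta)^\epsilon)^{n-1}$ as a lemma and then obtains the interval formula by the shift $\epsilon\mapsto\epsilon+\ell-1$, whereas you go to the interval case directly. The paper's argument is in fact looser than yours on the two subtleties you flag (ties and the effect of conditioning on $hb$-consistency); it simply ignores both.
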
 \vspace{-0.2in}

%
The formula above suggests that when $n$ increases, snapshots that are \emph{hb}-consistent will become less physically consistent. 
This is expected since the more number of processes, the harder to find \emph{hb}-consistent snapshots as well as physically close \emph{hb}-consistent snapshots.
On the contrary, if we increase $\beta$, predicates will be more frequent and there are more physically close \emph{hb}-consisten snapshots.

We use the characteristics of this function to compute the sensitivity of asynchronous monitors.  We focus on $\epsilon$ since it is likely to vary over time. We consider the special case where predicates are point-based denoted as $\phi( x,n,\beta) = \phi( x,n,\beta, 1)$.

We identify two inflection points of $\frac{\partial \phi( \epsilon,n,\beta)}{\partial \epsilon}$, denoted as $\epsilon_{p_1}$ and $\epsilon_{p_2}$ where  $\phi$ changes rapidly for $\epsilon \in [\epsilon_{p_1},\epsilon_{p_2}]$. On the other hand, we observe that if $\epsilon \leq \epsilon_{p_1}$ or $\epsilon \geq \epsilon_{p_2}$, the change of $\phi$ is very small. That is, in the range, $[0 .. \epsilon_{p_1}]$, the monitor has lots of false-positives and is not very sensitive to changes in the value of $\epsilon$. Moreover, in range $[\epsilon_{p_2}, \infty]$, the monitor has little false positives and again not sensitive to changes in the value of $\epsilon$. However, in the range $[\epsilon_{p_1} .. \epsilon_{p_2}]$, the monitor is very sensitive to changes in $\epsilon$. In other words, except in the range $[\epsilon_{p_1} .. \epsilon_{p_2}]$, we can compute the precision of the asynchronous monitor even with only an approximate knowledge of $\epsilon$ used in the partially synchrony model. 

Our next result shows that the gap between two inflection points of $\frac{\partial \phi( \epsilon,n,\beta)}{\partial \epsilon}$ approaches zero for large $n$.

\begin{thm}
For $n > 1$, two inflection points of $\frac{\partial \phi( \epsilon,n,\beta)}{\partial \epsilon}$ are at $$ \{\epsilon_{p1},\epsilon_{p2}\} = \log_{(1-\beta)}(\frac{3n-4 \pm \sqrt{5n^2-16n+12}}{2(n-1)^2}) $$ 
 Where $\epsilon_{p1} < \epsilon_{p2}$. Furthermore, the relative \uncertain range approaches $0$ as $n$ increases. In other words, the relative difference of phase transition $\epsilon_{p_1}$ and post-phase transition $\epsilon_{p_2}$ converges to 0, which is independent of $\beta$. That is, $$\lim_{n\rightarrow \infty} \frac{\epsilon_{p_2}-\epsilon_{p_1}}{\epsilon_{p_1}} = 0$$

\label{thm:uncertainrange}
\end{thm}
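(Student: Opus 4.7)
The plan is to treat $\epsilon$ as a continuous variable in the closed form of Theorem~1 with $\ell=1$, namely $\phi(\epsilon,n,\beta) = (1-(1-\beta)^{\epsilon})^{n-1}$, and to locate the inflection points of $\partial\phi/\partial\epsilon$ as the zeros of $\partial^{3}\phi/\partial\epsilon^{3}$. Because $\phi$ is sigmoidal in $\epsilon$, its derivative is a unimodal bump whose two inflection points (one to each side of the peak) delimit the rapid-change window; this is precisely the hypersensitive interval $[\epsilon_{p_1},\epsilon_{p_2}]$ the theorem pins down.

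First I would substitute $u = (1-\beta)^{\epsilon}$ and $a = \ln(1-\beta) < 0$, so that $u' = au$ and $\phi = (1-u)^{n-1}$. Three applications of the chain rule give
$$
\begin{array}{rcl}
\phi' &=& -a(n-1)\,u(1-u)^{n-2},\\
\phi'' &=& -a^{2}(n-1)\,u(1-u)^{n-3}\,[1-(n-1)u],\\
\phi''' &=& -a^{3}(n-1)\,u(1-u)^{n-4}\,[(n-1)^{2}u^{2}-(3n-4)u+1].
\end{array}
$$
The third line is the only nontrivial step: a product-rule expansion of $u(1-u)^{n-3}[1-(n-1)u]$, followed by pulling out the common factor $(1-u)^{n-4}$, produces the quadratic $(n-1)^{2}u^{2}-(3n-4)u+1$ inside the bracket after coefficient-matching (the $u^{2}$ coefficient collects to $2(n-1)+(n-3)(n-1)=(n-1)^{2}$, the $u$ coefficient to $-(2n-1)-(n-3)=-(3n-4)$, and the constant term is $1$). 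For $\epsilon>0$ we have $u\in(0,1)$ and $a\ne 0$, so $\phi'''=0$ reduces to this quadratic, whose roots are $u_{\pm}=(3n-4\pm\sqrt{5n^{2}-16n+12})/(2(n-1)^{2})$. Inverting $u=(1-\beta)^{\epsilon}$ and using that $\log_{1-\beta}$ is decreasing, $\epsilon_{p_1}$ corresponds to the ``$+$'' root (larger $u$, smaller $\epsilon$) and $\epsilon_{p_2}$ to the ``$-$'' root.

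For the limit statement I would extract the asymptotics from Vieta's relations: $u_{p_1}u_{p_2}=1/(n-1)^{2}$ and $u_{p_1}+u_{p_2}=(3n-4)/(n-1)^{2}$, so both roots are $\Theta(1/n)$, and more precisely $u_{p_1}=\frac{3+\sqrt{5}}{2n}(1+O(1/n))$ and $u_{p_2}=\frac{3-\sqrt{5}}{2n}(1+O(1/n))$. Taking logarithms and dividing by $\ln(1-\beta)<0$ then gives
$$
\epsilon_{p_i}=\frac{\ln n}{-\ln(1-\beta)}+O(1),\qquad \epsilon_{p_2}-\epsilon_{p_1}=\frac{\ln\bigl((3+\sqrt{5})/(3-\sqrt{5})\bigr)}{-\ln(1-\beta)}+o(1).
$$
The difference tends to a positive constant depending only on $\beta$, whereas $\epsilon_{p_1}=\Theta(\ln n)$, so $(\epsilon_{p_2}-\epsilon_{p_1})/\epsilon_{p_1}=\Theta(1/\ln n)\to 0$ as $n\to\infty$, with $\beta$ affecting only the implicit constant.

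The main obstacle is simply bookkeeping in the third-derivative expansion: a single sign or binomial-coefficient slip destroys the clean factorization into $(1-u)^{n-4}Q(u)$ and breaks everything downstream. I would sanity-check at $n=4$, where the quadratic becomes $9u^{2}-8u+1=0$ with roots $(4\pm\sqrt{7})/9$, both clearly inside $(0,1)$, confirming two genuine interior inflection points of the bump $\phi'$ and matching the general formula.
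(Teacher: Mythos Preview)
Your proof is correct and follows the same approach the paper sketches: compute the third derivative of $\phi$ in $\epsilon$, solve the resulting quadratic in $u=(1-\beta)^{\epsilon}$, and then take the limit of the ratio. You supply the details the paper omits; one small sharpening is that the ratio $(\epsilon_{p_2}-\epsilon_{p_1})/\epsilon_{p_1}=\ln(u_{-}/u_{+})/\ln u_{+}$ is \emph{exactly} independent of $\beta$ for every $n$ (the common factor $1/\ln(1-\beta)$ cancels), not merely in the limit.
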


This result means there is an abrupt change (i.e., phase transition) between the range of lots of false postives to the range of little false positives.  For small value of $n$ such as 100, the value of $\frac{\epsilon_{p_2}-\epsilon_{p_1}}{\epsilon_{p_1}}$ is approximately 0.52. The absolute value of $\epsilon_{p1}$ depends on $n,\beta$. 

To understand the main point of the result,
we can instantiate some concrete values. For example, taking the unit of time granularity as millisecond, with $n = 50$ and $\beta = 0.001$ (i.e., the local predicate is true every second on average), the two points of inflection of slope are at 3635.41 and 5550.24 msecs respectively. This means if the system has $\epsilon$ less than 3 seconds, then with high probability the $hb$-consistent global conjunctive predicate is not $\epsilon$-consistent. If the system has  $\epsilon$ more than 6 seconds, then with high probability the $hb$-consistent detection is also $\epsilon$-snapshot. 
As another example, with  $n = 50$ and $\beta =  0.5$, the two points of inflection of slope are at 5.24 and 8.01 msecs, respectively. This means with high probability $hb$-consistent predication is $\epsilon$-consistent if the system has $\epsilon$ greater than 8 msecs.

\else

\subsection{Analytical Model} 
\label{sec:vc-an}
In this section, we consider the analytical model to compute the false positive rate associated with predicate detection with vector clocks. In particular, we consider the case where we use an algorithm that uses vector clocks in a $\systype{\epsilon,\delta}$-system. In this case, it is possible that the snapshot computed by the algorithm is consistent as far as happened before relation is concerned but it is not feasible under the partial synchrony model. In other words, using an algorithm designed for asynchronous systems in partial synchronous systems can result in a false positive. Hence, in this section, we consider the following question:

\begin{quote}

If we use an algorithm for predicate detection that is designed for an asynchronous system and apply it in a partially synchronous system, what is the likelihood that it would result in a false positive.

\end{quote}

Clearly, some false positive rate is unavoidable. However, if the false positive rate is too high (say 90\%) then such an algorithm would be unacceptable. Our goal is to characterize the interval where (1) the algorithm is {\em \functional}, i.e., the false positive rate is low, (2) the algorithm is {\em \ineffective}, and (3) the algorithm effectiveness is {\em \uncertain}. Identifying these values provides the sensitivity of the algorithm to partial synchrony. 

To characterize the analytical model, we introduce the following parameters: $\epsilon$, real time synchronization bound; $n$, number of processes; $\beta$, probability of predicate being true at each time for each process independently; and $\ell$, length of an interval (i.e., predicate being true for $\ell$ time). In analytical model, we assume that each interval has fixed length $\ell$.  We also assume that the system has been running for a long time to prevent cold start problem. We consider the case where we \textit{have identified} a snapshot that is consistent as far as happened before relation is concerned.

 We compute the false positive rate which is defined as the probability of a snapshot discovered by an algorithm such as that in \cite{garg} (that assumes an asynchronous system) is a false positive in $\systype{\epsilon,\delta}$-system. To obtain this result, we use the term $hb$-consistent to denote a global consistent snapshot where any two local snapshots are concurrent. We use the term $\epsilon$-consistent to denote an $hb$-consistent snapshot where the physical clock value of any two predicates differ by at most $\epsilon$. For interval-based predicate, two intervals $[a_1,b_1]$ and $[a_2,b_2]$ differ by $\max(\max(a_1,a_2)-\min(b_1,b_2),0)$. Let $c$ be a snapshot consisting of a collection of intervals $[a_i,b_i]$ for each process $i=0$ to $n-1$. Denote $L(c)$ as a length of snapshot defined by the least value of $x$ such that $c$ is $x$-consistent snapshot.

We first show, $\phi( \epsilon,n,\beta)$,  probability of $hb$-consistent snapshot being $\epsilon$-snapshot for point-based predicate. This is equivalent to computing distribution of $L(c)$ where each interval has length 1. For point-based predicate, the result is as follows and its derivation is provided as a proof. For convenience, we denote $\phi( x ,n,\beta)$ as $g(x)$ representing the length of a point-based predicate snapshot. 

\begin{lem}
For point-based predicate, let $c$ be an $hb$-consistent snapshot. 
The probability of $c$  being  $\epsilon$-consistent (true positive rate) is $\phi( \epsilon,n,\beta) = (1-(1-\beta)^{\epsilon})^{n-1}$
\label{lem:tpr-hbcut}
\end{lem}
\begin{proof}
We first fix one process to have true predicate at time 0.  We define random variable $x_i$ as the first time after time 0 that the predicate is true at process $i$, $ 2 \leq i \leq n$. So, $x_i$ has geometric distribution with parameter $\beta$, i.e., $P(x_i \leq \epsilon) = 1-(1-\beta)^{\epsilon}$. The cut is $\epsilon$-consistent if all points are not beyond $\epsilon$. That is, 
\begin{eqnarray*}
 P(\max_{1 \leq i \leq n} x_i \leq \epsilon) &=& \prod_{i=1}^{n-1} P(x_i \leq \epsilon) \\
& = &(1-(1-\beta)^{\epsilon})^{n-1}
\end{eqnarray*}
\end{proof}

Next, we compute $\phi( \epsilon,n,\beta, \ell)$ , probability of $hb$-consistent snapshot being $\epsilon$-snapshot for interval-based predicate of length $\ell$. Using Lemma \ref{lem:tpr-hbcut}, we can obtain the following result.  For convenience, we denote $\phi( x ,n,\beta,\ell)$ as $f(x)$ representing the length of a interval-based predicate snapshot. 
\begin{thm}
For interval-based predicate, given $c$ is an $hb$-consistent snapshot, The probability of $c$  being  $\epsilon$-consistent (true positive rate) is  $$ \phi( \epsilon,n,\beta, \ell) = (1-(1-\beta)^{\epsilon+\ell-1})^{n-1} $$
\label{thm:vcinterval}
\end{thm}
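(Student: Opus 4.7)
The plan is to reduce the interval case directly to the point-based case already handled by Lemma~1, by observing that extending each local predicate from a point to an interval of length $\ell$ is, from the point of view of the $\epsilon$-consistency test, equivalent to inflating the tolerance from $\epsilon$ to $\epsilon + \ell - 1$.

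First I would fix coordinates exactly as in the proof of Lemma~1: without loss of generality, anchor process~$1$'s interval of the $hb$-consistent snapshot at $[0,\ell-1]$, which is legitimate by time-stationarity of the local-predicate process. For each other process $i$, let $x_i$ be the start time of the first interval at process~$i$ occurring at time $\geq 0$. Exactly as in Lemma~1, $x_i$ is $\mathrm{Geometric}(\beta)$, and the $x_i$ are mutually independent because each process truthifies its local predicate independently.

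Next I would characterize the $\epsilon$-consistency condition between $[0,\ell-1]$ and $[x_i, x_i+\ell-1]$ using the stated pairwise separation $\max(\max(a_1,a_2)-\min(b_1,b_2),0)$. A short case analysis shows that for $x_i\geq 0$ this separation equals $\max(x_i-(\ell-1),0)$: the intervals overlap for $x_i\in[0,\ell-1]$ and separate by $x_i-(\ell-1)$ for $x_i\geq \ell-1$. Hence process~$i$'s interval is $\epsilon$-consistent with process~$1$'s iff $x_i \leq \epsilon + \ell - 1$, and
\[
\Pr[x_i \leq \epsilon+\ell-1] \;=\; 1-(1-\beta)^{\epsilon+\ell-1}.
\]
The snapshot is $\epsilon$-consistent iff this holds for all $n-1$ remaining processes simultaneously, so by independence
\[
\phi(\epsilon,n,\beta,\ell) \;=\; \bigl(1-(1-\beta)^{\epsilon+\ell-1}\bigr)^{n-1},
\]
which is the claimed formula.

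The main obstacle, and the place the proof has to be careful, is justifying that conditioning on $hb$-consistency does not change the computation and that looking only at the first interval at times $\geq 0$ at each other process is correct. The first point is because in the model the local predicates at distinct processes are independently truthified and no messages constrain their order, so any cut picking one interval per process is automatically $hb$-consistent; the second is exactly the symmetry argument already used in Lemma~1, where picking process~$1$'s anchor at $0$ and scanning forward captures the relevant closest candidate interval at each other process. With those two points spelled out, the rest is the algebra above, which is essentially Lemma~1 with the substitution $\epsilon \mapsto \epsilon + \ell - 1$.
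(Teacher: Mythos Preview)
Your proposal is correct and follows essentially the same approach as the paper: anchor one process at time $0$, model the other processes' interval start times as independent $\mathrm{Geometric}(\beta)$ variables, and observe that interval $\epsilon$-consistency reduces to the point-based condition with $\epsilon$ replaced by $\epsilon+\ell-1$, at which point Lemma~1 gives the formula. The paper phrases the reduction via the global length $L(c)=\max(\max_i a_i-\min_i b_i,0)$ rather than your pairwise-with-anchor computation, but since the anchored process has the earliest start (hence the earliest end when all intervals have length $\ell$), the two formulations are equivalent and the arguments coincide.
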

\begin{proof}
We simply calculate $P(L(c) \leq \epsilon)$. In this case, $L(c) = \max(\max_i(\{a_i\})-\min_i(\{b_i\})),0)$ by definition of length of snapshot $c$, $L(c)$.  Hence, 
 \begin{eqnarray*}
 P(L(c) \leq \epsilon) &=& P(\max(\max_i(\{a_i\})-\min_i(\{b_i\}),0) \leq \epsilon) \\
   & =&  P(\max_i(\{a_i\}) \leq \epsilon+\ell-1)  \\
	 & =&  g(x+\ell-1)  
\end{eqnarray*}
 The result follows since $ \phi( \epsilon,n,\beta, \ell) = P(L(c) \leq \epsilon)$.
\end{proof}

Next, we use the characteristics of this function to compute the sensitivity of the algorithm in \cite{garg,marzulloDet}. \footnote{$\phi$ should have three parameters} Since $\beta$ and $n$ are generally known for a given system more precisely, we focus on $\epsilon$ since it is likely to vary over time. 
%

We make observations about the function $\phi( \epsilon,n,\beta, \ell = 1)$, i.e., point-based predicate. Note that we can easily compute the interval based predicate result since interval-based can be computed by $f(x+\ell-1)$. When $\epsilon$ is small, $\phi( \epsilon,n,\beta)$ is low, and $\frac{\partial \phi( \epsilon,n,\beta)}{\partial \epsilon} $ is small. At this point, most $hb-$snapshots are not $\epsilon$-snapshot since the false negative rate is too large. In other words, when $\epsilon$ is small, the predicate detection algorithm is \ineffective. Moreover, $ \frac{\partial \phi( \epsilon,n,\beta)}{\partial \epsilon} $ is low and, hence, the false positive rate does not change significantly with a small change in $\epsilon$. After a certain point called phase transition $\epsilon_{p_1}$, the change becomes accelerating up to the point called post-phase transition $\epsilon_{p_2}$ where the function becomes low that the slope becomes decelerating. 
The range between phase transition and post-phase transition is \uncertain as in this range, the algorithm changes from being \ineffective to being \functional. After point $\epsilon_{p_2}$, the function is essentially small. This means in the range $[\epsilon_{p_2}, \infty]$, with high probability, $hb-$snapshot is also $\epsilon$-snapshot. We identify these two points, $\epsilon_{p_1}, \epsilon_{p_2}$ as inflection points of slopes in $\phi( \epsilon,n,\beta)$. 

From the above discussion, the effect of partial synchrony can be captured from Figure \ref{fig:epsiliononetwo}. In the range, $[0 .. \epsilon_{p_1}]$, the algorithm is \ineffective and not very sensitive to changes in the value of $\epsilon$. Moreover, in range $[\epsilon_{p_2}, \infty]$, the algorithm is \functional and not sensitive to changes in the value of $\epsilon$. However, in the range $[\epsilon_{p_1} .. \epsilon_{p_2}]$, the algorithm is very sensitive to changes in $\epsilon$. In other words, except in the range $[\epsilon_{p_1} .. \epsilon_{p_2}]$ we can compute the effectiveness of $VC$-based detection even if only an approximate value of $\epsilon$ used in the partial synchrony model is known.

Thus, the natural question is how large is the \uncertain range. Since the precise value depends upon $\beta$ and $n$, we focus on relative \uncertain range by comparing $\frac{\epsilon_{p_1}-\epsilon_{p_2}}{\epsilon_{p_1}}$. We find that this relative \uncertain range is independent of $\beta$ \footnote{add thereorem for the same} and approaches $0$ as $n \rightarrow \infty$.



\begin{lem}
For $n > 1$, two inflection points of slopes are at $$ \epsilon_{p1} = \frac{\ln(\frac{3n-4+ \sqrt{5n^2-16n+12}}{2(n-1)^2})}{\ln(1-\beta)}$$ 
and 
$$ \epsilon_{p1} = \frac{\ln(\frac{3n-4-\sqrt{5n^2-16n+12} }{2(n-1)^2})}{\ln(1-\beta)}$$
\label{lem:two}
\end{lem}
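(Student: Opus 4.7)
I would work with the substitution $y := (1-\beta)^{\epsilon}$, so that $\phi(\epsilon)=(1-y)^{n-1}$ and $\frac{dy}{d\epsilon}=\ln(1-\beta)\cdot y$. Since the inflection points of $\partial\phi/\partial\epsilon$ are precisely the zeros of $\partial^{3}\phi/\partial\epsilon^{3}$, the strategy is to differentiate $\phi$ three times, factor out every quantity that does not vanish for $y\in(0,1/(n-1))$, and read off a quadratic in $y$ whose two roots yield the inflection points. After two chain-rule differentiations one gets
\[
\phi''(\epsilon) \;=\; -(n-1)(\ln(1-\beta))^{2}\, y\,(1-y)^{n-3}\bigl[\,1-(n-1)y\,\bigr],
\]
and one further differentiation factors as $\phi'''(\epsilon)=(\text{nonzero})\cdot y\cdot h'(y)$, where $h(y):=y(1-y)^{n-3}[1-(n-1)y]$.

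The cleanest way to extract the quadratic is via the logarithmic derivative,
\[
\frac{h'(y)}{h(y)}\;=\;\frac{1}{y}-\frac{n-3}{1-y}-\frac{n-1}{1-(n-1)y}.
\]
Clearing denominators by multiplying through with $y(1-y)(1-(n-1)y)$ and collecting powers of $y$ collapses the expression to $Q(y)=(n-1)^{2}y^{2}-(3n-4)y+1$. The quadratic formula then gives
\[
y_{\pm}\;=\;\frac{(3n-4)\pm\sqrt{5n^{2}-16n+12}}{2(n-1)^{2}}.
\]
Because $0<1-\beta<1$, the map $y\mapsto\log_{1-\beta}y$ is strictly decreasing, so the larger root $y_{+}$ corresponds to the smaller $\epsilon$. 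Applying $\log_{1-\beta}$ to each root recovers the stated formulas, with the $+$ branch giving $\epsilon_{p_1}$ and the $-$ branch giving $\epsilon_{p_2}$.

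For the asymptotic claim, a direct expansion yields $y_{\pm}=\tfrac{3\pm\sqrt{5}}{2n}\,(1+o(1))$ as $n\to\infty$, so $\ln y_{\pm}=-\ln n+O(1)$, while $\ln(y_{-}/y_{+})\to\ln\!\bigl(\tfrac{3-\sqrt{5}}{3+\sqrt{5}}\bigr)$, a finite nonzero constant. Since $\epsilon_{p_i}=\ln y_i/\ln(1-\beta)$, the factor $\ln(1-\beta)$ cancels in the ratio, giving
\[
\frac{\epsilon_{p_2}-\epsilon_{p_1}}{\epsilon_{p_1}}\;=\;\frac{\ln y_{-}-\ln y_{+}}{\ln y_{+}}\;=\;\frac{O(1)}{-\ln n+O(1)}\;\longrightarrow\;0,
\]
independent of $\beta$. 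The main obstacle is simply the bookkeeping in the three chain-rule differentiations; the logarithmic-derivative trick makes this collapse to a clean quadratic, after which the rest is routine algebra and a one-line asymptotic estimate.
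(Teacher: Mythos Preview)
Your argument is correct and follows exactly the approach the paper indicates: set the third derivative of $\phi$ to zero and solve. The paper's own proof is a one-line sketch (``Solve a system of equations of the third order derivative of $\phi(\epsilon,n,\beta)$ with respect to $\epsilon$''), so your substitution $y=(1-\beta)^{\epsilon}$ and the logarithmic-derivative factorization simply fill in the details the paper omits; the resulting quadratic $(n-1)^{2}y^{2}-(3n-4)y+1=0$ and its discriminant $5n^{2}-16n+12$ match the statement exactly. Your final paragraph on the ratio $(\epsilon_{p_2}-\epsilon_{p_1})/\epsilon_{p_1}\to 0$ actually proves the accompanying Theorem~\ref{thm:uncertainrange} as well, which the paper again leaves as ``take the ratio and compute the limit.''
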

\begin{proof} [Proof Sketch]
Solve a system of equations of the third order derivative of $\phi( \epsilon,n,\beta)$ by definition of  inflection points of slopes.
\end{proof}

Now, we show that the gap between two inflection points of slopes approaches zero.
\begin{thm}
The relative \uncertain range approaches $0$ as $n$ increases. In other words, 

The relative difference of phase transition and post-phase transition converges to 0, which is independent of $\beta$. That is, 

$$\lim_{n\rightarrow \infty} \frac{\epsilon_{p_2}-\epsilon_{p_1}}{\epsilon_{p_1}} = 0$$
\end{thm}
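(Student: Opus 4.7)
The plan is to plug in the explicit formulas for $\epsilon_{p_1}$ and $\epsilon_{p_2}$ from the preceding lemma and then perform a short asymptotic analysis in $n$. Write $u_{\pm} := \dfrac{3n-4 \pm \sqrt{5n^2-16n+12}}{2(n-1)^2}$, so that $\epsilon_{p_1} = \log_{1-\beta} u_{+}$ and $\epsilon_{p_2} = \log_{1-\beta} u_{-}$ (the order is correct because $\ln(1-\beta)<0$, so the larger argument yields the smaller $\epsilon$). The first step is to record the algebraic identity
\begin{equation*}
\frac{\epsilon_{p_2} - \epsilon_{p_1}}{\epsilon_{p_1}} \;=\; \frac{\ln u_{-} - \ln u_{+}}{\ln u_{+}} \;=\; \frac{\ln(u_{-}/u_{+})}{\ln u_{+}}.
\end{equation*}
This makes the claimed $\beta$-independence of the ratio manifest in one line, since the common factor $1/\ln(1-\beta)$ cancels between numerator and denominator. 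So for the limit statement it suffices to analyse the purely $n$-dependent expression on the right.

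Next I would expand $u_{+}$, $u_{-}$, and their ratio for large $n$. Factoring $n$ out of the square root gives $\sqrt{5n^2-16n+12} = n\sqrt{5}\bigl(1+O(1/n)\bigr)$, from which $u_{+} \sim \dfrac{3+\sqrt{5}}{2n}$ and $u_{-} \sim \dfrac{3-\sqrt{5}}{2n}$. Both arguments therefore tend to $0$, while the ratio satisfies
\begin{equation*}
\frac{u_{-}}{u_{+}} \;=\; \frac{3n-4-\sqrt{5n^2-16n+12}}{3n-4+\sqrt{5n^2-16n+12}} \;\longrightarrow\; \frac{3-\sqrt{5}}{3+\sqrt{5}} \;\in\; (0,1).
\end{equation*}

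The final step assembles these estimates. Since $u_{-}/u_{+}$ tends to a fixed positive constant strictly less than $1$, the numerator $\ln(u_{-}/u_{+})$ converges to a finite (negative) constant. Since $u_{+} \to 0$, the denominator $\ln u_{+}$ diverges to $-\infty$. A finite quantity divided by a divergent one tends to $0$, yielding $\lim_{n\to\infty}\frac{\epsilon_{p_2}-\epsilon_{p_1}}{\epsilon_{p_1}} = 0$ as required.

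I do not expect a real obstacle; the proof is essentially asymptotic bookkeeping once the cancellation identity in the first step is observed. The only point that deserves a brief sanity check is the sign handling: one must note that $\ln(1-\beta)$ and $\ln u_{+}$ are both negative and that the mapping $u \mapsto \log_{1-\beta} u$ reverses order, so that $\epsilon_{p_1}<\epsilon_{p_2}$ corresponds to $u_{+}>u_{-}$. Everything else follows from the expansion of $\sqrt{5n^2-16n+12}$ and the elementary fact that a bounded numerator over a logarithmically diverging denominator vanishes.
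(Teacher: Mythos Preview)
Your proposal is correct and follows essentially the same approach as the paper, which simply says to take the ratio of the two inflection-point expressions from the preceding lemma and compute the limit as $n\to\infty$. Your write-up supplies the details the paper omits, including the explicit cancellation of $\ln(1-\beta)$ that establishes $\beta$-independence and the asymptotic expansion showing the numerator stays bounded while the denominator diverges.
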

\begin{proof}[Proof Sketch]
Take ratio from Lemma \ref{lem:two}, and compute the limit.
\end{proof}

This result means there is an abrupt change between the range of \ineffective to the range of \functional.  For small value of $n$ such as 100, the ratio is approximately 1.52. The absolute value of $\epsilon_{p1}$ depends on $n,\beta$. 

We can instantiate practical value to understand the main point of the result. For example, consider a system where clock tick is an order of millisecond granularity with $n = 50$ and $\beta = 0.001$-- meaning that predicate is true every second on average. Two points of inflections of slopes are at 3635.41 and 5550.24 msecs respectively. This means if the system has $\epsilon$ less than three seconds, then with high probability the $hb$-snapshot is not $\epsilon$-snapshot. On the other hand, with different values of $\beta =  0.5$-- meaning that predicate is true every two millisecond, two points of inflections of slopes are at 5.24 and 8.01, respectively. This means with high probability $hb$-snapshot is $\epsilon$-snapshot if the system has $\epsilon$ greater than 8 msecs.

Moreover, in the range where the algorithm is either \functional or \ineffective, the algorithms is {\em not} very sensitive to changes in $\epsilon$. Hence, even if the value of $\epsilon$ used in the model is slightly different from that in the actual system, the effectiveness of predicate detection algorithm does not change significantly. Moreover, the relative \uncertain range approaches $0$ as $n \rightarrow \infty$.


\fi

\subsection{Simulation setup}
\label{sec:simulationmodel}


To validate the analytical model, we set up a simulation environment. The simulation code is available at \url{http://www.cse.msu.edu/~nguye476/}.
In our simulation, at any given instance, with a certain probability a process chooses to advance its clock as long as the synchrony requirement will not be violated. When a process increments its clock, it can decide if the local predicate is true with probability $\beta$. Depending upon point-based detection and/or interval-based detection, the local predicate will remain true for just one instant or for a duration whose length is chosen by an exponential distribution.
Furthermore, when a process advances its clock, it can choose to send a message to a randomly selected process with probability $\alpha$. The delay of this message will be $\delta$, the minimum message delay. Note that the analytical model predicts that the possibility that a given cut is a false positive is independent of $\alpha$ and $\delta$. We find that this result is also valid with simulations. Hence, delivering the message as soon as it is allowed does not change the false positive rate. The values of $\alpha$ and $\delta$ only affects the number of snapshots identified.

%

%
\begin{wrapfigure}{r}{0.50\textwidth}
\vspace{-30pt}
\begin{center}
\includegraphics[width=0.45\textwidth]{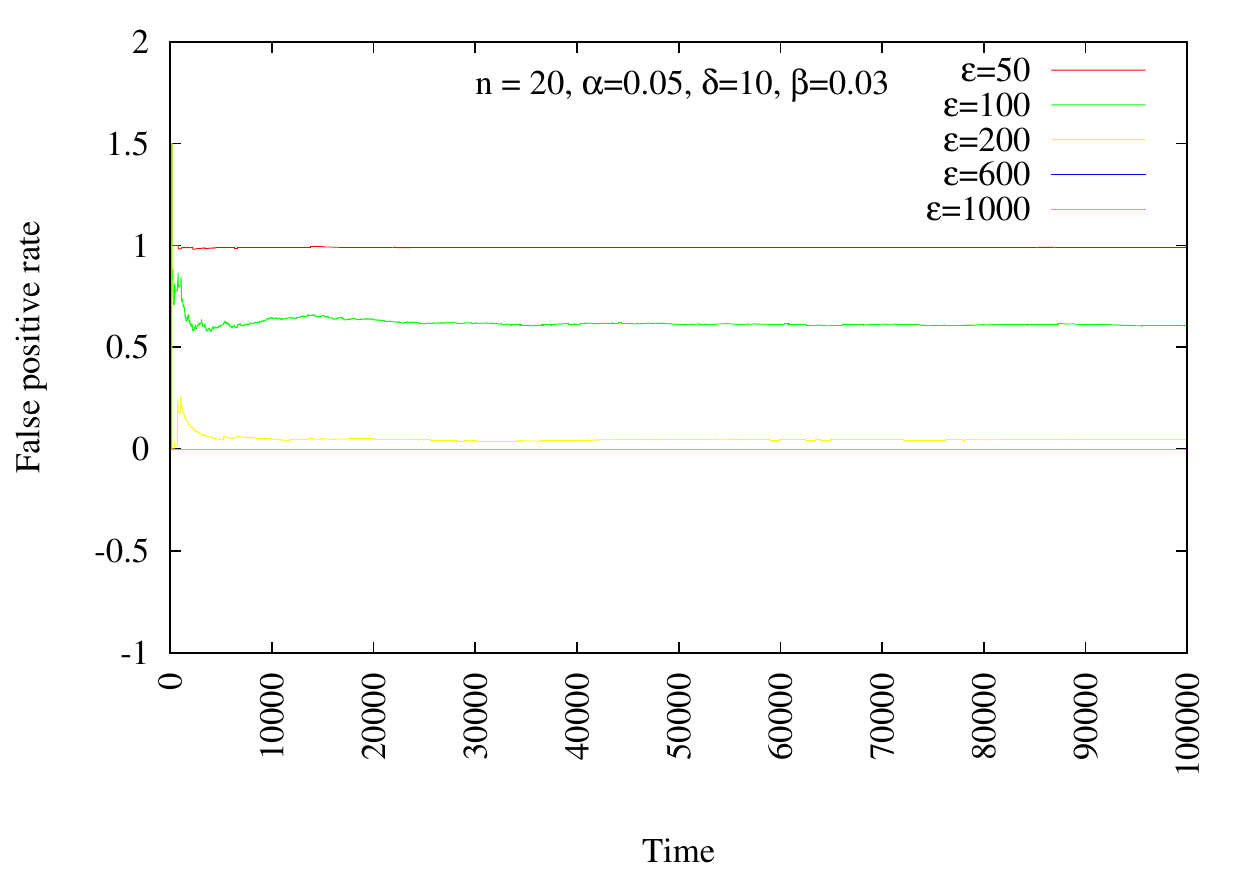}
\end{center}
\vspace{-20pt}
\caption{Convergence of false positive rates}
\label{fig:fprtime}
\vspace{-15pt}
\end{wrapfigure}
We first validate that running the simulation upto time $100,000$ computes
stable false positive rates.
Initially the false positive rate varies substantially.
However, when sufficiently many snapshots are identified it stabilizes to the
desired value. To validate this, we considered how the false positive rates vary
during different simulations. Figure \ref{fig:fprtime} show these results for
different values of $\epsilon$ while $n$, $\delta$, $\alpha$ and $\beta$ are
fixed. From these results, we find that the desired false-positive ratio
stabilizes fairly quickly. When we vary $n$, $\delta$, $\alpha$ and $\beta$, we
also observe a similar stabilizing pattern.

%
\ifspace
When it identifies a snapshot where the given predicate is true, we use the timestamps of the corresponding events to determine if this snapshot is valid with the synchrony requirements. In case of interval based detection, we consider intervals associated with local snapshots to determine if some events in this interval are compatible with the synchrony requirements. If the local snapshots identified by VC (including intervals associated with them) are inconsistent with synchrony requirements, we consider it as a false positive.
\fi

Thus, we run our simulations until each process advances its clock to $100,000$. During this run, we identify $Y$, the number of snapshots identified by the asynchronous monitor algorithm in~\cite{garg},  and $Y_F$, the number of snapshots that are also $\epsilon$-consistent. Thus, the false positive rate $FPR$ is calculated as $1-  \frac{Y_{F}}{Y}$.

\subsection{Sensitivity for point-based predicates}
\label{sec:basic}


{\bf Independence of false positive rate with respect to $\alpha$ and $\delta$. } \
\begin{wrapfigure}{r}{0.45\textwidth}
 \vspace{-30pt}
\begin{center}
\includegraphics[width=0.45\textwidth]{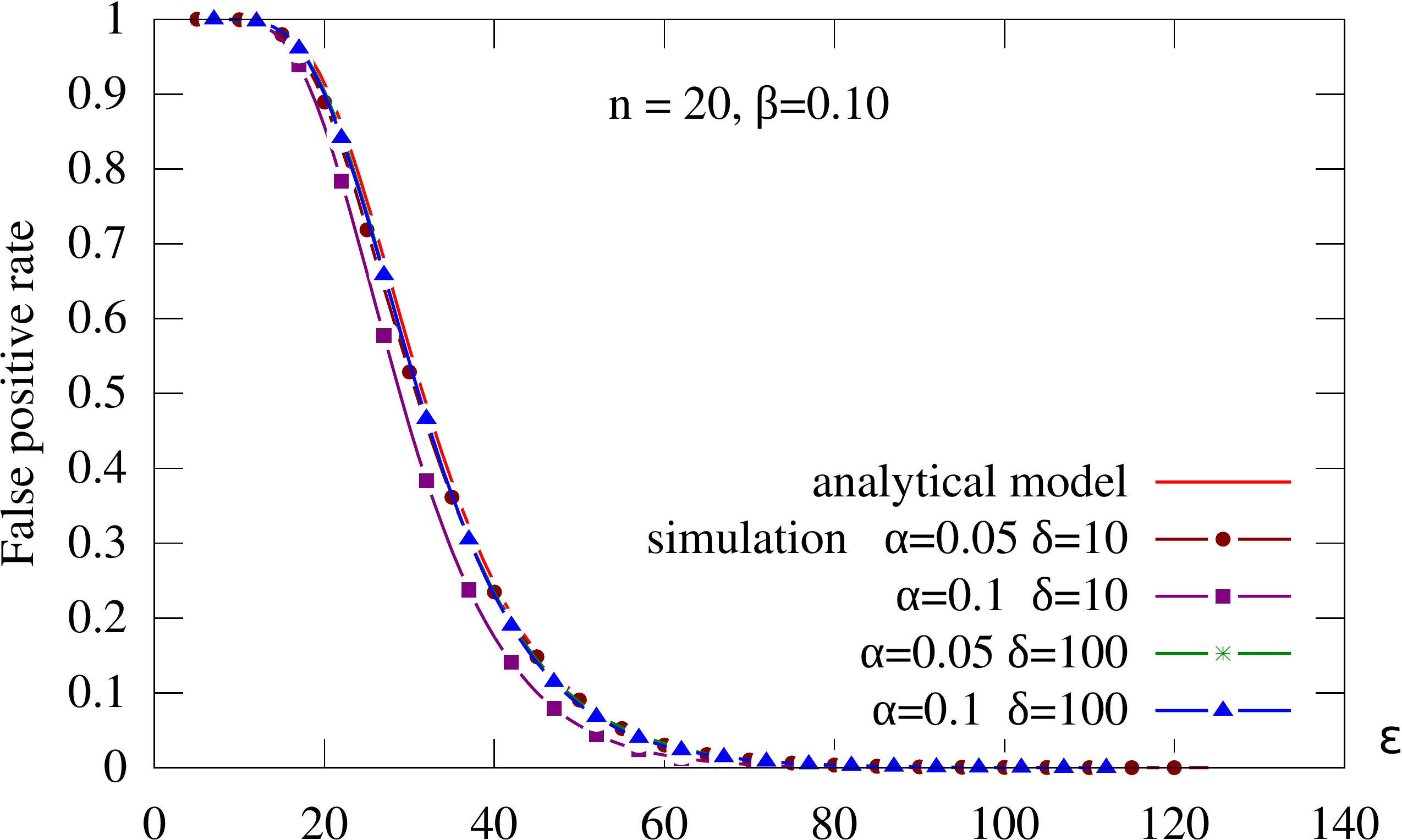}
\end{center}
 \vspace{-20pt}
\caption{The independence of false positive rate with $\alpha$ and $\delta$, shown by analytical model and simulations}
\label{fig:fpr-alpha-delta}
 \vspace{-25pt}
\end{wrapfigure}
Since the analytical model predicts that the false positive rate is independent of $\alpha$ and $\delta$, we validate this result with our simulation. Specifically, Figure \ref{fig:fpr-alpha-delta} considers the false positive rates for $n=20, \beta=0.10$. We consider different values of $\alpha=0.05$, $0.1$ and $\delta=10$, $100$ and compare the simulation results with the analytical model. The simulation results validate the analytically computed false positive rate as well as the fact that it is independent of $\alpha$ and $\delta$.
\\
{\bf Effect of $\epsilon$. } \
Figures \ref{fig:fpr-vs-eps}(a)-\ref{fig:fpr-vs-eps}(b) illustrate the effect of false positive rate for different values of $\epsilon$. Figure \ref{fig:fpr-vs-eps}(a), \ref{fig:fpr-vs-eps}(b) consider the cases with $n=5$, and $20$ processes, respectively. In each figure, we vary $\beta$ from $1\%$ to $8\%$. The results validate the analytical model's prediction that values of $\epsilon$ can be divided into 3 ranges: a brief range of lots of false positives to the left when $\epsilon$ is small, a range of little false positives to the right when $\epsilon$ is large, and a short uncertainty range in the middle where small change in $\epsilon$ significantly effects the false positive rates.

{\bf Effect of $\beta$. } \
As expected from the analytical model, when the value of $\beta$ is close to $0$, the predicted false positive rate is $1$. And, as $\beta$ approaches $1$, false positive rate approaches $0$. We validate this result with Figure \ref{fig:fpr-vs-eps}(b). When considering a network of $20$ processes, and $\beta$ is small, say $1\%$, the false positive rate at $\epsilon=200$ is $93.51\%$. By contrast if $\beta$ is increased to $3\%$ and $5\%$ then the false positive rate decreases to $4.68\%$ and $0.08\%$ respectively.

\begin{figure*}[tbhp]
\begin{minipage}{\linewidth}
\vspace{-15pt}
\begin{center}
\subfigure[]{\includegraphics[width=0.46\columnwidth]{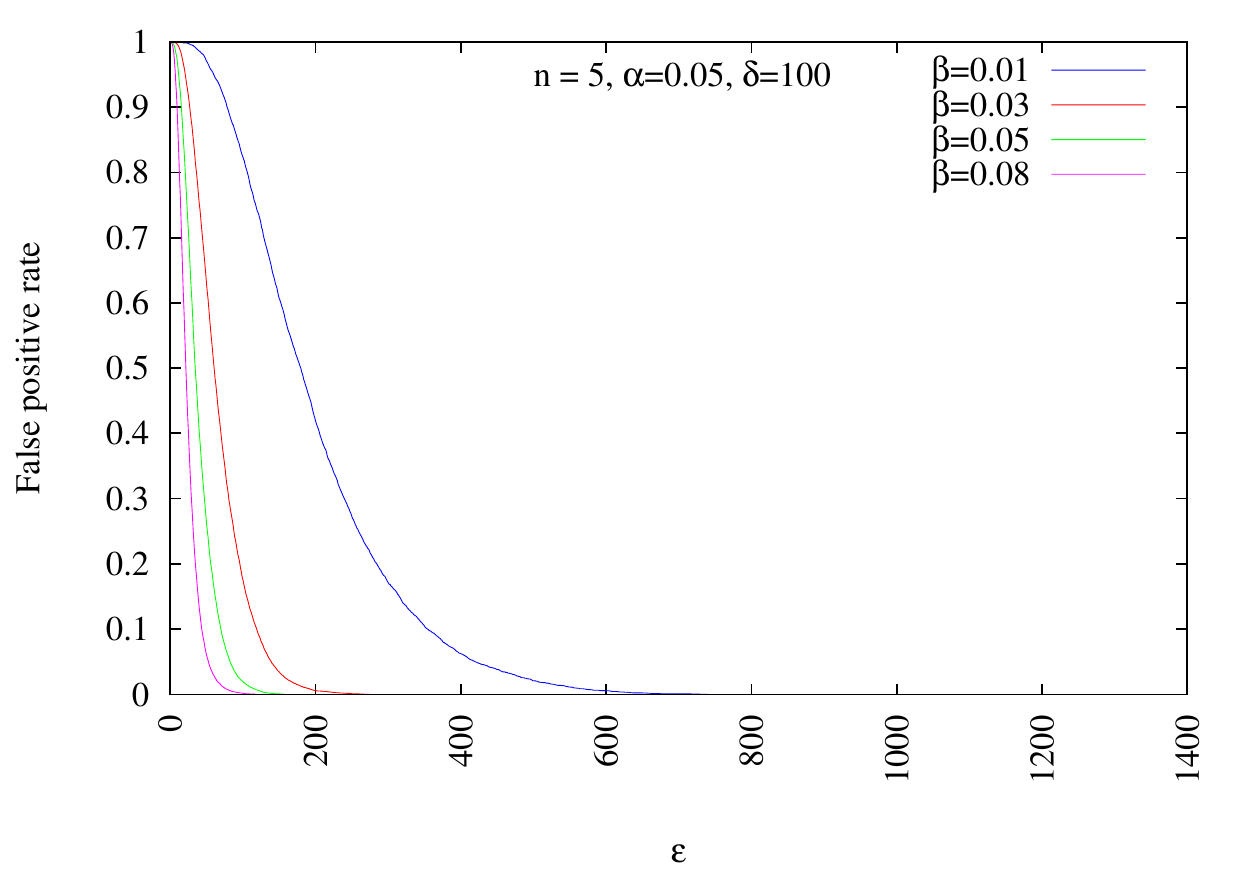}}
\subfigure[]{\includegraphics[width=0.46\columnwidth]{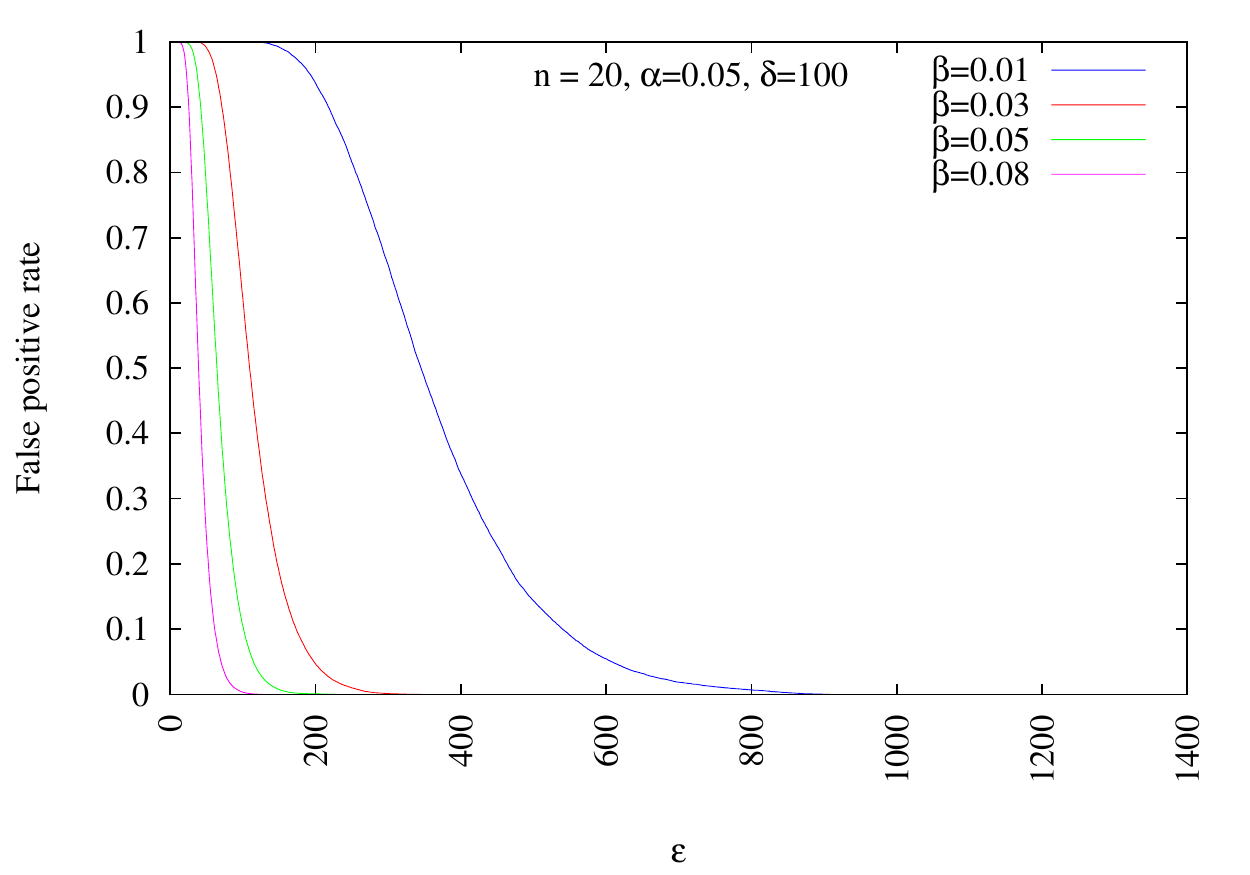}}
\end{center}
\vspace{-15pt}
\caption{Impact of $\beta$ and $n$ on false positive rates.}
\label{fig:fpr-vs-eps}
\vspace{-10pt}
\end{minipage}
\end{figure*}

{\bf Effect of $n$. } \
The analytical model predicts that when $n$ increases, the false positive rate increases. The speed of change depends on particular $\beta$. This result is confirmed in Figures \ref{fig:fpr-vs-eps} (a), \ref{fig:fpr-vs-eps}(b). Let $\beta=0.01$, when $n$ is small, say $5$, the false positive rate at $\epsilon=200$ is $43.71\%$. If $n$ is increased to $20$ then the false positive rate increases to $93.51\%$.

\subsection{Sensitivity for correlated point-based predicates}
\label{sec:vc-correlation}


In our analytical model, we assumed that $\beta$, the probability of local predicates being true, is independent at each process. In this section, we compare the analytical model with simulations where the truth value of local predicates on different processes is correlated. While we consider some specific approaches to add correlation below, our analysis technique is useful for several other correlations as well. 

We first consider a model abbreviated as \emph{PMA} (Positively correlated with MAjority). In \emph{PMA} correlation model, processes are divided into 2 groups. Each process in the first group of size $G_1$ generates predicates independently with the same base rate $\beta$ at each clock tick. A process in the second group of size $G_2 = n - G_1$ has 2 possibilities: (1) either it follows the majority of the first group with probability $P_{dep}$ or (2) chooses the truth value independently of others with rate $\beta$ (with probability $P_{ind} = 1 - P_{dep}$). The values of $G_1, G_2, P_{dep}, P_{ind}$ are configurable.

In a rough estimation of the false positive rate under \emph{PMA} model, we observe that given that the local predicates in group $G_1$ are close enough in a snapshot, the chance for the snapshot to be a false positive would depend on whether the predicates in group $G_2$ are close enough to the first group or not. This would in turn depend on cases where predicates in group $G_2$ are independently generated (if they are dependently generated, they would be close the predicates of the first group).

The probability that a predicate independently generated by a process in group $G_2$ is at time $t$ apart from group $G_1$ follows a geometric distribution which is $(1 - \beta_{ind})^{t-1}\beta_{ind}$  where $\beta_{ind} = P_{ind}*\beta$. Hence, the probability that all $G_2$ processes of the second group are within the $\epsilon$ distance from the first group is roughly $\phi_{PMA}(\epsilon, n, \beta) = 1 - (1 - (1-\beta_{ind})^\epsilon)^{G_2}$.

As shown in Figure \ref{fig:hpma}, 
\ref{fig:thnma}, these estimates agree well with our simulation results when we vary the values of $P_{ind}$ and $G_2$.

\begin{figure*}
    \vspace{-10pt}
    \begin{center}
        \subfigure[]{%
            \label{fig:hpma}
            \includegraphics[height=0.2\textwidth, width=0.45\textwidth]{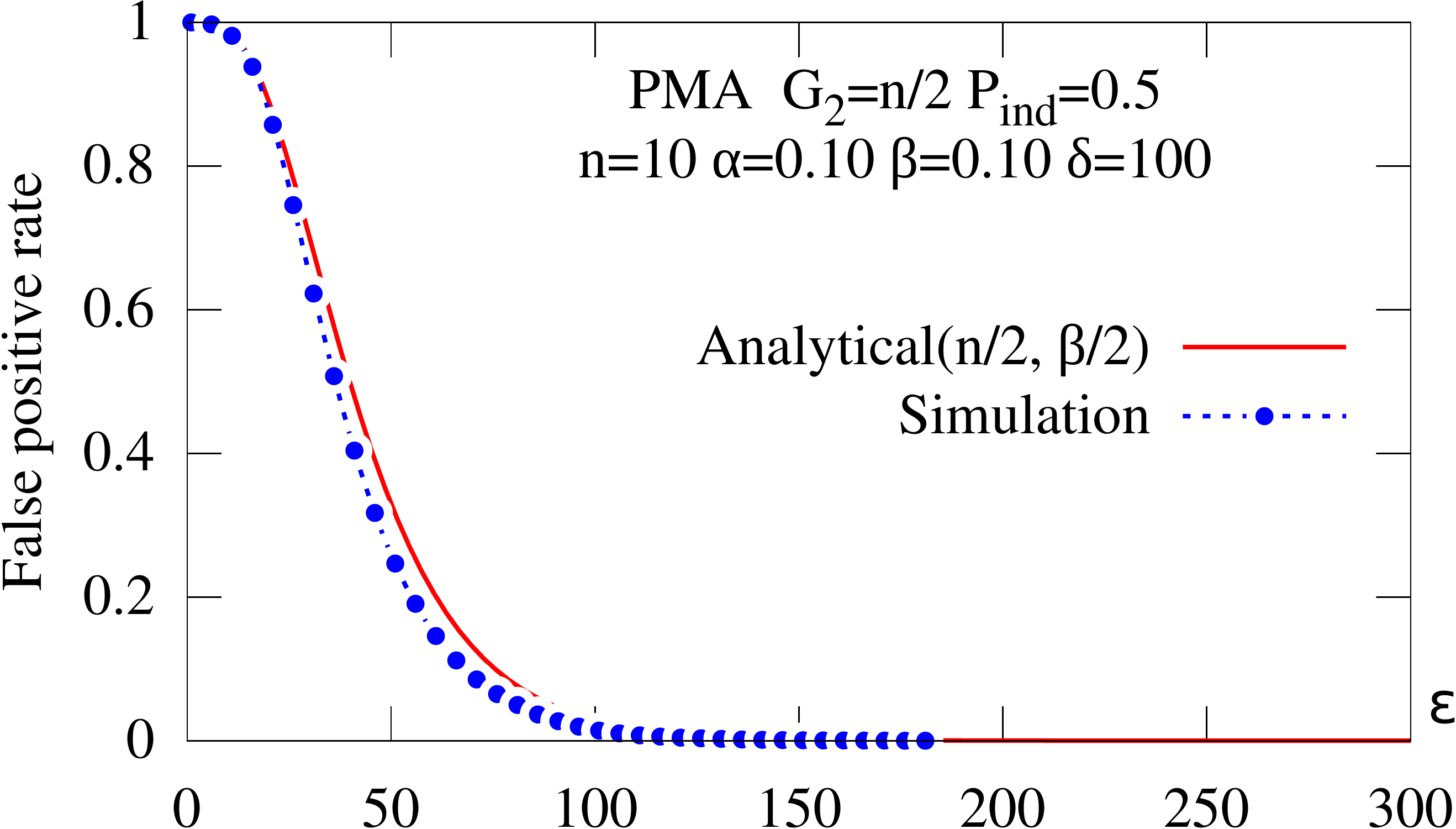}
        }%
        \subfigure[]{%
           \label{fig:thnma}
           \includegraphics[height=0.2\textwidth, width=0.45\textwidth]{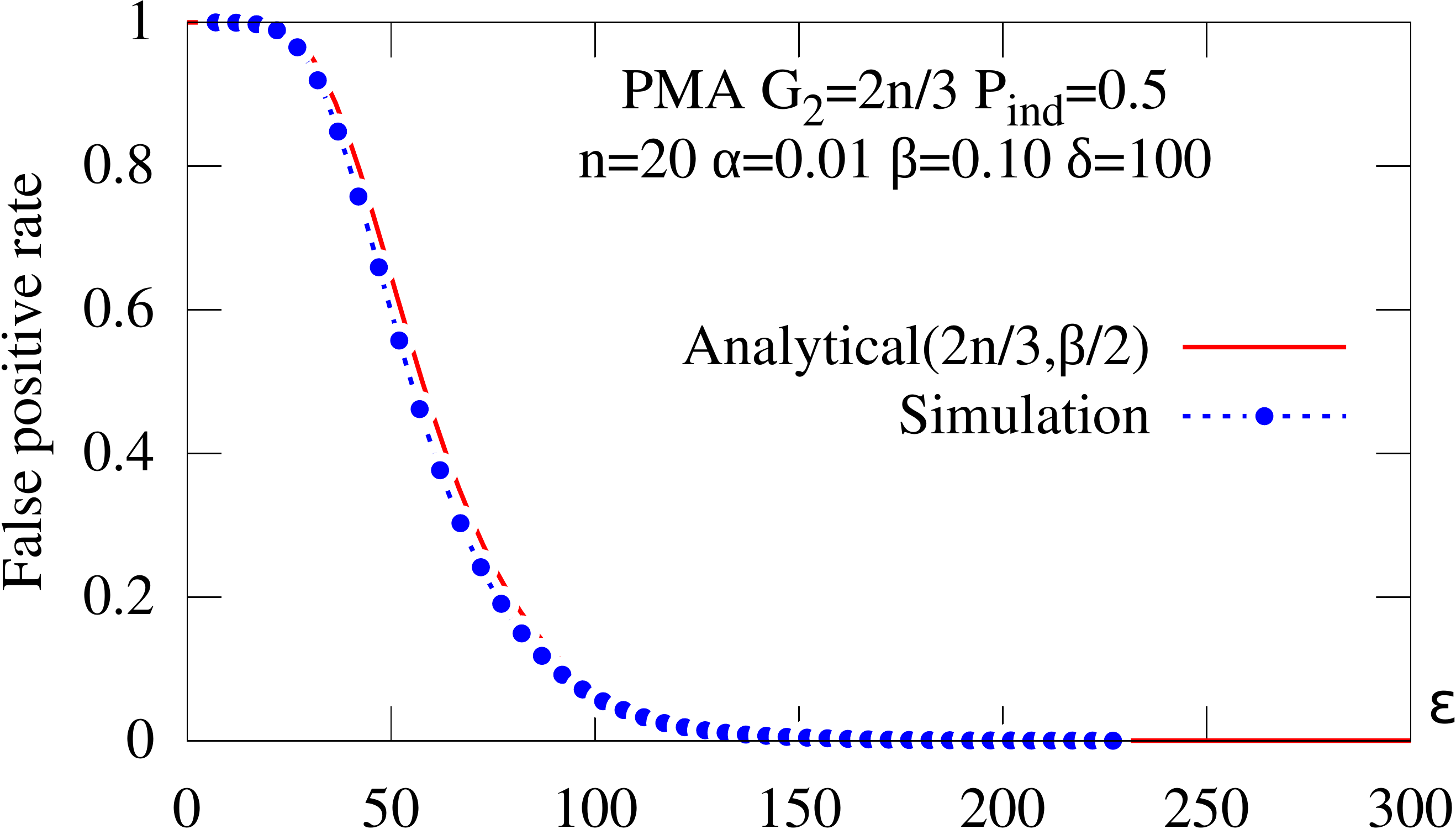}
        } \\
        \vspace{-5pt}
        \subfigure[]{%
            \label{fig:hnma}
            \includegraphics[height=0.2\textwidth,width=0.45\textwidth]{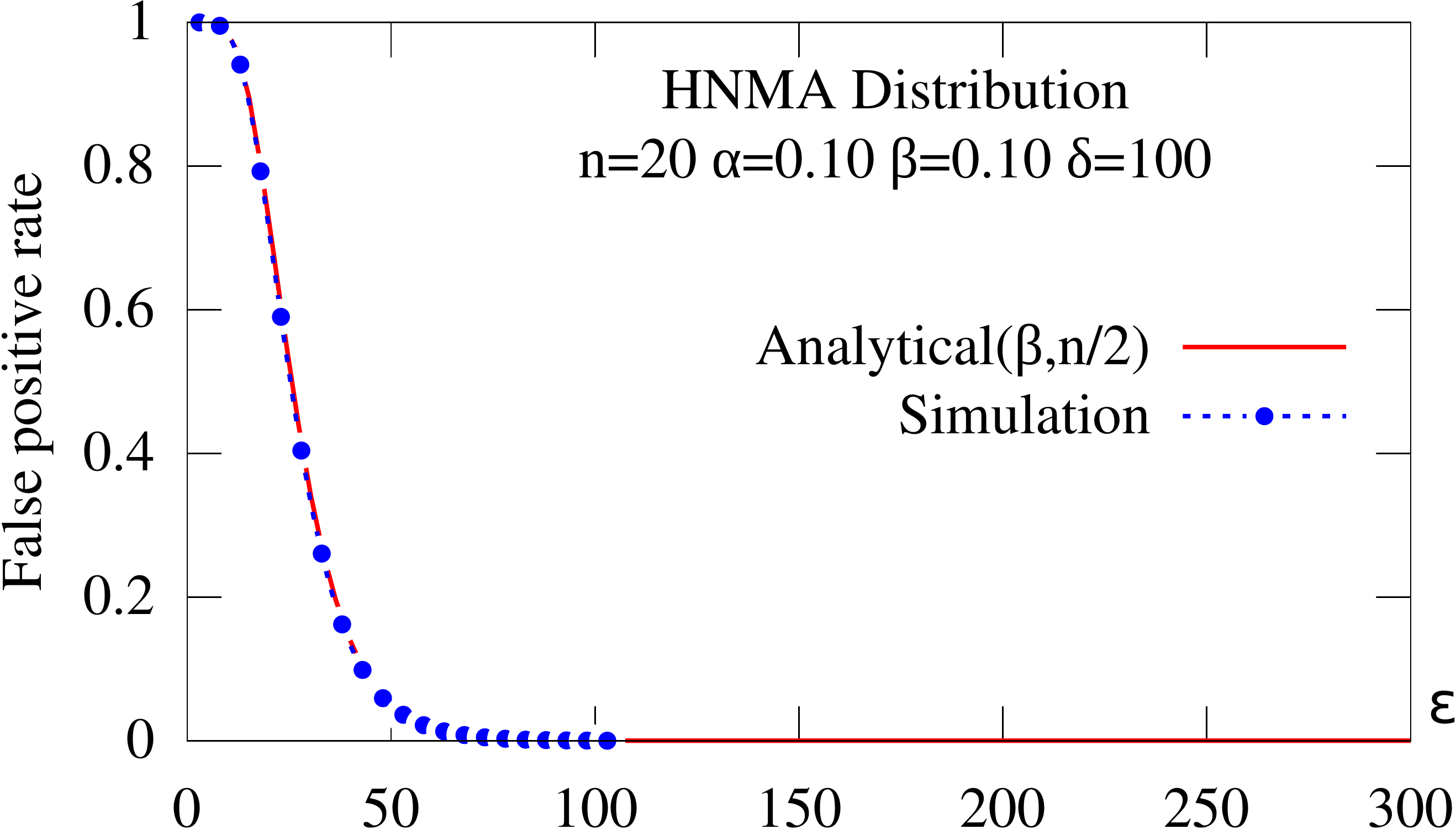}
        } 
        \subfigure[]{%
           \label{fig:pmaj20}
           \includegraphics[height=0.2\textwidth, width=0.45\textwidth]{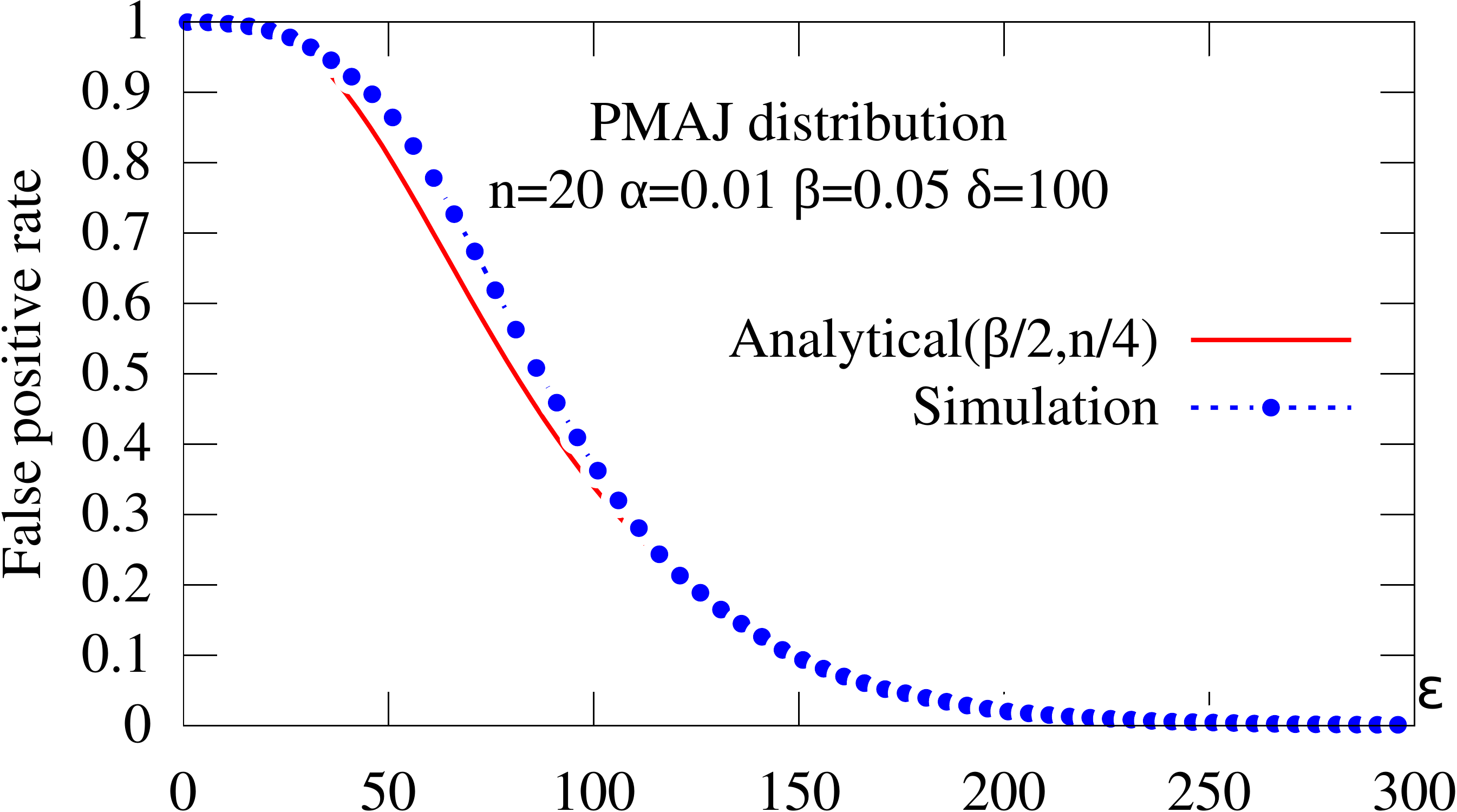}     
        } \\
        \vspace{-5pt}
    \end{center}
    \vspace{-10pt}
    \caption{%
        False positive rates in some correlated models.
     }%
    \label{fig:correlation}

\end{figure*}


We also consider other correlation models for processes' predicates such as \emph{HNMA} and \emph{PMAJ}.
The \emph{HNMA} (Half Negatively correlated with MAjority) model is the similar to \emph{PMA} where $G_1 = \frac{n}{2}, P_{dep} = 0.5$  with one exception: processes in the second group would follow the minority
of the first group.
In \emph{PMAJ} (Positively correlated with MAjority upto index J) model, process $0$ chooses whether its predicate is true with probability $\beta$. The truth value of other predicates is correlated with predicates in its preceding processes (w.r.t. process ID).
In particular, each process $j$ will follow the majority of its preceding processes (i.e. processes $0, ..., j-1$)  with probability of $0.5$; for another $0.5$ probability it will generate predicates by its own $\beta$. %
As shown in Figure \ref{fig:hnma}, 
\ref{fig:pmaj20}, there are parameters that help our analytical model to estimate the
simulation results of these correlation models under different parameter settings (e.g. $n, \beta$). For example, the effective $(n,\beta)$ for the correlation
model \emph{HNMA} and \emph{PMAJ} are $(\frac{n}{2},\beta)$, $(\frac{n}{4},\frac{\beta}{2})$ respectively.

The interprocess correlation of truth value of local predicates could perturb the curve of the false positive rate by pulling or pushing the curve. We have considered several such correlations and find that the analytical model matches the simulation model reasonably closely if we update a new value of $n$ and a new value of $\beta$. 

\subsection{Sensitivity with interval-based predicates}
\label{sec:interval}



Point-based scenarios could be generalized to interval-based scenarios where local predicates is true for a certain interval of time, $\ell$.
Interval based predicates are also more likely in partially synchronous model where the local predicate is expected to be true for a certain duration.

In this case, the false positive rate needs to computed slightly differently. In particular, suppose we have an $hb$-consistent snapshot that is not $\epsilon$-consistent, it may still be possible there is another snapshot that contains the same set of messages (in terms of send and receive) and is $\epsilon$-consistent. In particular, such an $\epsilon$-consistent message would simply delay certain processes to obtain an $\epsilon$-consistent snapshot. (As an example, consider Figure \ref{fig:interval-example}. the snapshot consisting of $e_1$ and $f_1$ is not $\epsilon$-consistent if $\epsilon=10$. But $e_2$ and $f_1$ is $\epsilon$-consistent.)
Also, we do not identify two similar snapshots as distinct snapshots. In particular, in Figure \ref{fig:interval-example},  we compute $e_3$ and $f_2$ as the same snapshot as $e_2$ and $f_1$.

\begin{figure}[tbhp]
 \vspace{-10pt}
\centering
\includegraphics[width=0.5\textwidth]{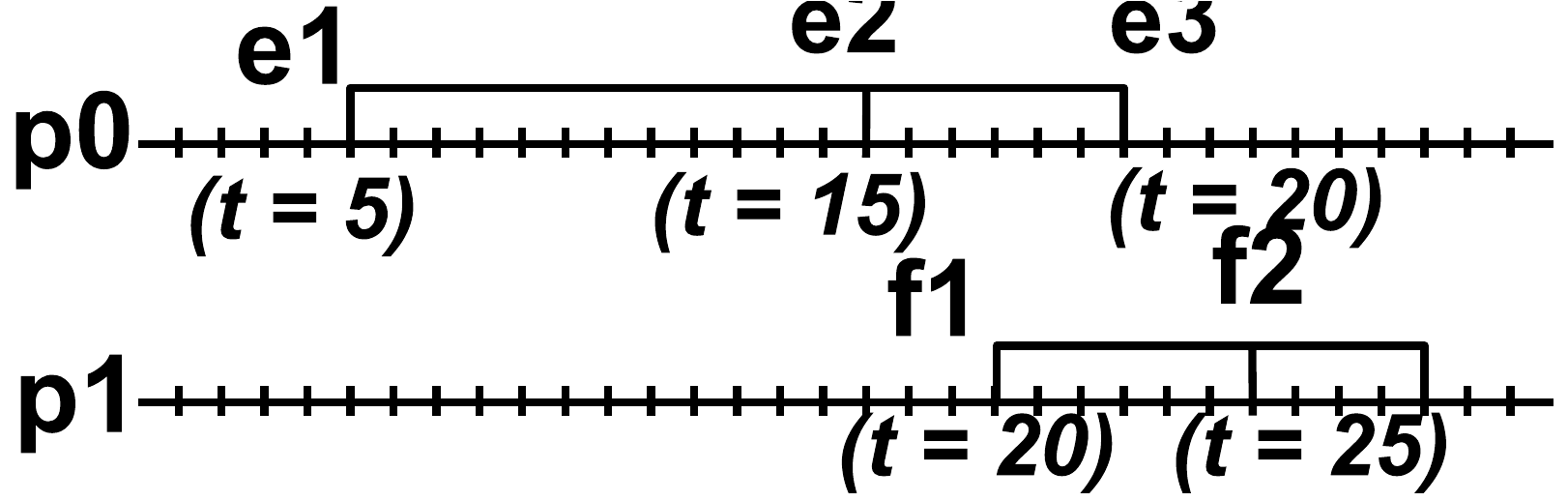}
\caption{Interval-based predicate detection.}
\vspace{-10pt}
\label{fig:interval-example}
\end{figure}

In our simulations, when the local predicate is marked to be true, it also identifies
%
\begin{wrapfigure}{r}{0.50\textwidth}
\vspace{-20pt}
\begin{center}
\includegraphics[width=0.45\textwidth]{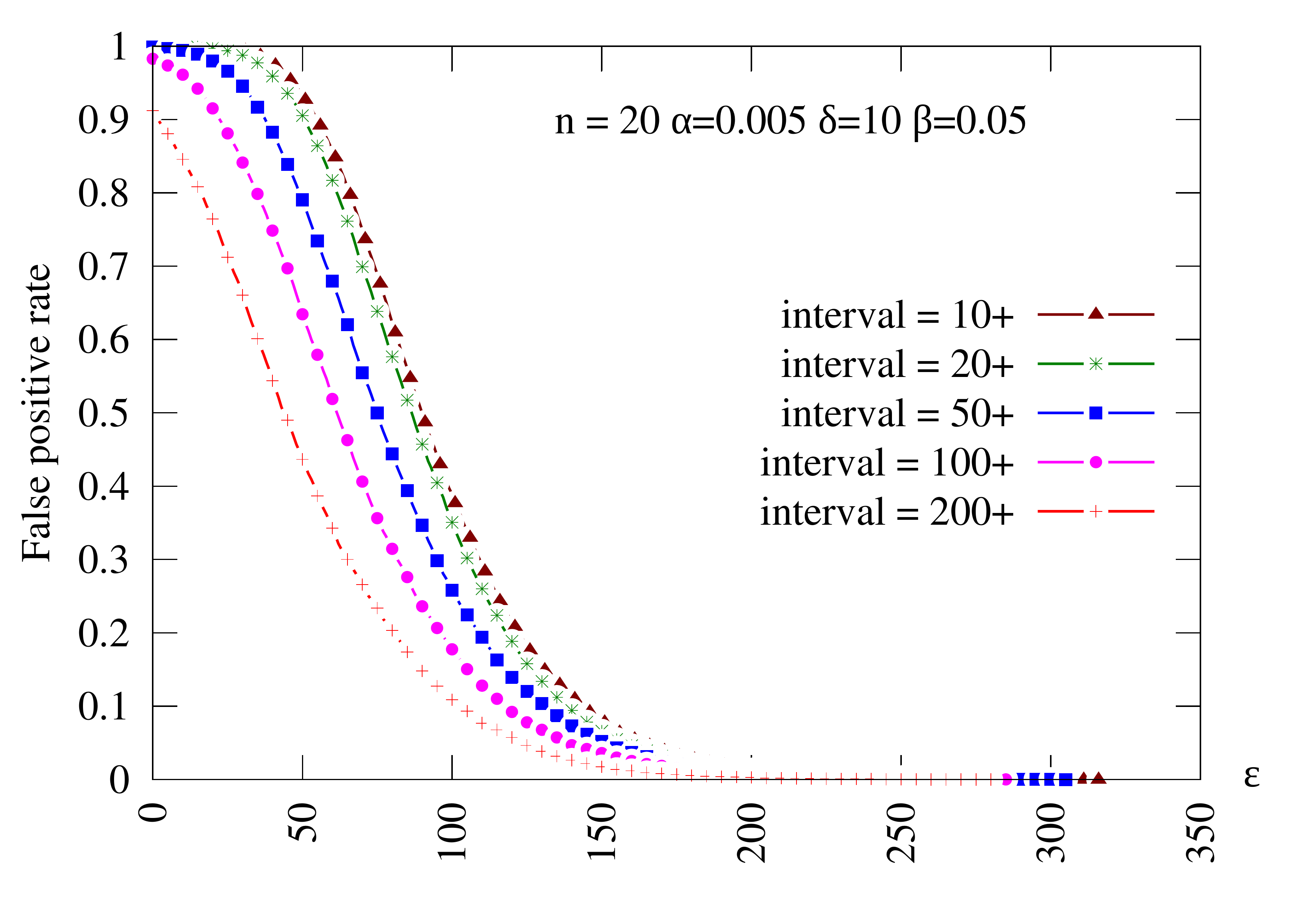}
\end{center}
\vspace{-15pt}
\caption{Impact of interval length on interval-based predicate detection.}
\label{fig:interval_size_epsilon}
\vspace{-20pt}
\end{wrapfigure}
an interval for which the predicate will remain true. The length of the interval is selected randomly from a geometric distribution with probability of success $0.3$ (a randomly chosen value for these experiments).
%
%
Other than this, there is no change in the simulation model compared to that considered in Section \ref{sec:simulationmodel}.


%

Simulation results in Fig. \ref{fig:interval_size_epsilon} show that the false positive rate for interval-based detection is similar to that of point-based cases. Secondly, as the interval size gets larger, the false positive rate becomes smaller. These results are compatible with the analytical model.


%

\section{Precision, Recall, and Sensivity of Partially Synchronous Monitors} 
\label{sec:hvc-eps-effectiveness}

In this section,
we focus on the following problem:


\begin{quote}
Suppose we designed a monitor (predicate detection algorithm) for a $\systype{\epsilon_{mon}, \delta_1 }$-system and applied it in a system that turns out to be a $\systype{\epsilon_{app}, \delta_2 }$-system, then what are possible false positives/negatives that may occur?  \footnote{As validated in Section \ref{sec:vc-effectiveness}, the value of $\delta$ is not important. Hence, we only focus on the relation between $\epsilon_{mon}$ and $\epsilon_{app}$.} 
\end{quote}

\subsection{Analytical Model and its Validation with Simulation Results}
\label{sec:hvc-analytical}

We consider the case where the monitoring algorithm assumes partially synchronous model where clocks do not differ by more than $\epsilon_{mon}$. This algorithm is then used for monitoring an application that implicitly relies on the assumption that clocks are synchronized to be within $\epsilon_{app}$, that is difficult to compute and is unavailable to the monitoring algorithm. 
Such an application may make use $\epsilon_{app}$ with the use of timeouts, or even more implicitly by relying on database update and cache invalidation schemes to ensure that no two events that are more than $\epsilon_{app}$ can be part of the same global state as observed by the clients~\cite{faceConsistency}.

If $\epsilon_{app} < \epsilon_{mon}$, then the situation is similar that of the asynchronous monitors, where $\epsilon_{mon} = \infty$. However, if $\epsilon_{mon}$ is finite then it will reduce the false positives as this monitor will avoid detecting some instances where the time difference between the local predicates being true is too large. Thus, a monitor that assumes that clocks are synchronized to be within $\epsilon_{mon}$, will detect snapshots that are $\epsilon_{mon}$-consistent. However, it was only supposed to identify $\epsilon_{app}$-consistent snapshots. Hence, the precision of the algorithm, i.e., the ratio of the number of snapshots correctly detected and number of snapshots detected, can be determined by calculating the probability that an $\epsilon_{mon}$-consistent snapshot is also an $\epsilon_{app}$-snapshot. Also, in this case since every $\epsilon_{app}$-consistent snapshot is also an $\epsilon_{mon}$-consistent snapshot, the monitor will recall all correct snapshots. 

If $\epsilon_{app} > \epsilon_{mon}$, the situation would be reversed, i.e., precision will always be 1. But recall would be less than 1, as the monitor may fail to find some snapshots that are $\epsilon_{app}$ consistent but not $\epsilon_{mon}$-consistent. Thus, we have




\begin{thm}
When a monitor designed for \systype{\epsilon_{mon}, \delta}-system is used in an application that assumes that the system is \systype{\epsilon_{app}, \delta}-system, the Precision and Recall are as follows:

\begin{tabbing} 

$Precision =  \frac{f(\min(\epsilon_{app},\epsilon_{mon}))}{f(\epsilon_{mon})}$,  \hspace*{5mm} \= False positive rate = $1 - Precision $\\
$ Recall = \frac{f(\min(\epsilon_{app},\epsilon_{mon}))}{f(\epsilon_{app})}$ \> False negative rate = $1 - Recall$\\
Where $ f(x) = (1-(1-\beta)^{x+\ell-1})^{n-1} $

\end{tabbing}
\vspace{-0.1in}
\label{thm:hvcpr}
\end{thm}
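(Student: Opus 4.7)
The plan is to reduce the theorem to two applications of Theorem~\ref{thm:vcinterval} together with a simple monotonicity observation about $\epsilon$-consistency. First, I would fix the sample space to be the set of $hb$-consistent snapshots; this is the natural conditioning event, because the monitor (regardless of whether it is designed for $\epsilon_{mon}$ or for $\epsilon_{app}$) only ever considers snapshots that are $hb$-consistent, and Theorem~\ref{thm:vcinterval} already gives me $f(x) = (1-(1-\beta)^{x+\ell-1})^{n-1}$ for the conditional probability that an $hb$-consistent snapshot is $x$-consistent.

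The second step is the monotonicity lemma: if $\epsilon_1 \leq \epsilon_2$, then every $\epsilon_1$-consistent snapshot is also $\epsilon_2$-consistent. This follows immediately from the definition (the spread $L(c)$ of a snapshot is a single number, and $L(c)\leq\epsilon_1$ implies $L(c)\leq\epsilon_2$). Consequently, the event ``$c$ is $\epsilon_{mon}$-consistent and $\epsilon_{app}$-consistent'' coincides with the event ``$c$ is $\min(\epsilon_{mon},\epsilon_{app})$-consistent,'' and by Theorem~\ref{thm:vcinterval} its conditional probability given $hb$-consistency is $f(\min(\epsilon_{mon},\epsilon_{app}))$.

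Now I would identify the relevant sets inside the $hb$-consistent population. The snapshots \emph{flagged} by the monitor are those that are $\epsilon_{mon}$-consistent (probability $f(\epsilon_{mon})$). The snapshots that are actually \emph{correct detections} (i.e., truly realizable under the application's timing assumption) are those that are $\epsilon_{app}$-consistent (probability $f(\epsilon_{app})$). The true positives are those in the intersection, with probability $f(\min(\epsilon_{mon},\epsilon_{app}))$ by the previous step. Plugging into the standard definitions
\[
\text{Precision} \;=\; \frac{\Pr[\text{true positive}]}{\Pr[\text{flagged}]}, \qquad \text{Recall} \;=\; \frac{\Pr[\text{true positive}]}{\Pr[\text{correct}]}
\]
yields the two claimed formulas, and the false positive / false negative rates follow as their complements.

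I do not expect any genuine obstacle here; the content of the theorem is essentially a bookkeeping corollary of Theorem~\ref{thm:vcinterval} combined with monotonicity. The only point worth stating carefully is the degenerate case analysis, namely verifying that when $\epsilon_{mon}\leq \epsilon_{app}$ the formula collapses to Precision $= 1$ and Recall $= f(\epsilon_{mon})/f(\epsilon_{app})$, while the opposite ordering collapses to Recall $= 1$ and Precision $= f(\epsilon_{app})/f(\epsilon_{mon})$; this matches the informal description given in the paragraphs preceding the theorem statement and confirms that a single expression with $\min(\epsilon_{mon},\epsilon_{app})$ in the numerator captures both regimes uniformly.
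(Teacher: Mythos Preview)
Your proposal is correct and follows essentially the same route as the paper's proof: the paper also conditions on $hb$-consistency, writes Precision as $P(L(c)\leq\epsilon_{app}\mid L(c)\leq\epsilon_{mon})$, uses the same monotonicity observation to collapse the intersection event to $L(c)\leq\min(\epsilon_{app},\epsilon_{mon})$, and then invokes Theorem~\ref{thm:vcinterval} to obtain the ratio $f(\min(\epsilon_{app},\epsilon_{mon}))/f(\epsilon_{mon})$ (and symmetrically for Recall). Your write-up is slightly more explicit about the degenerate cases, but the argument is the same.
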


Next, we study the \textit{sensitivity} --changes in the value of Precision and Recall 
based on changes in $|\epsilon_{app} - \epsilon_{mon}|$-- of partially synchronous monitor. 
We visualize this by a diagram called PR-sensitivity Diagram using Precision and Recall. PR-sensitivity Diagram is basically a contour map of Precision and Recall given two variables $(\epsilon_{mon},\epsilon_{app})$. If $\epsilon_{app} > \epsilon_{mon}$, the diagram shows only Recall since Precision in this area is always one. Similarly, if $\epsilon_{app} < \epsilon_{mon}$, the diagram shows only Precision.   Let $\acbound$ be an accuracy bound, meaning that Precision and Recall are bounded by $\acbound$, PR-sensitivity Diagram shows contour whose value is $\acbound$. 

\begin{wrapfigure}{r}{0.50\textwidth}
\vspace{-20pt}
\begin{center}
\includegraphics[width=0.45\textwidth]{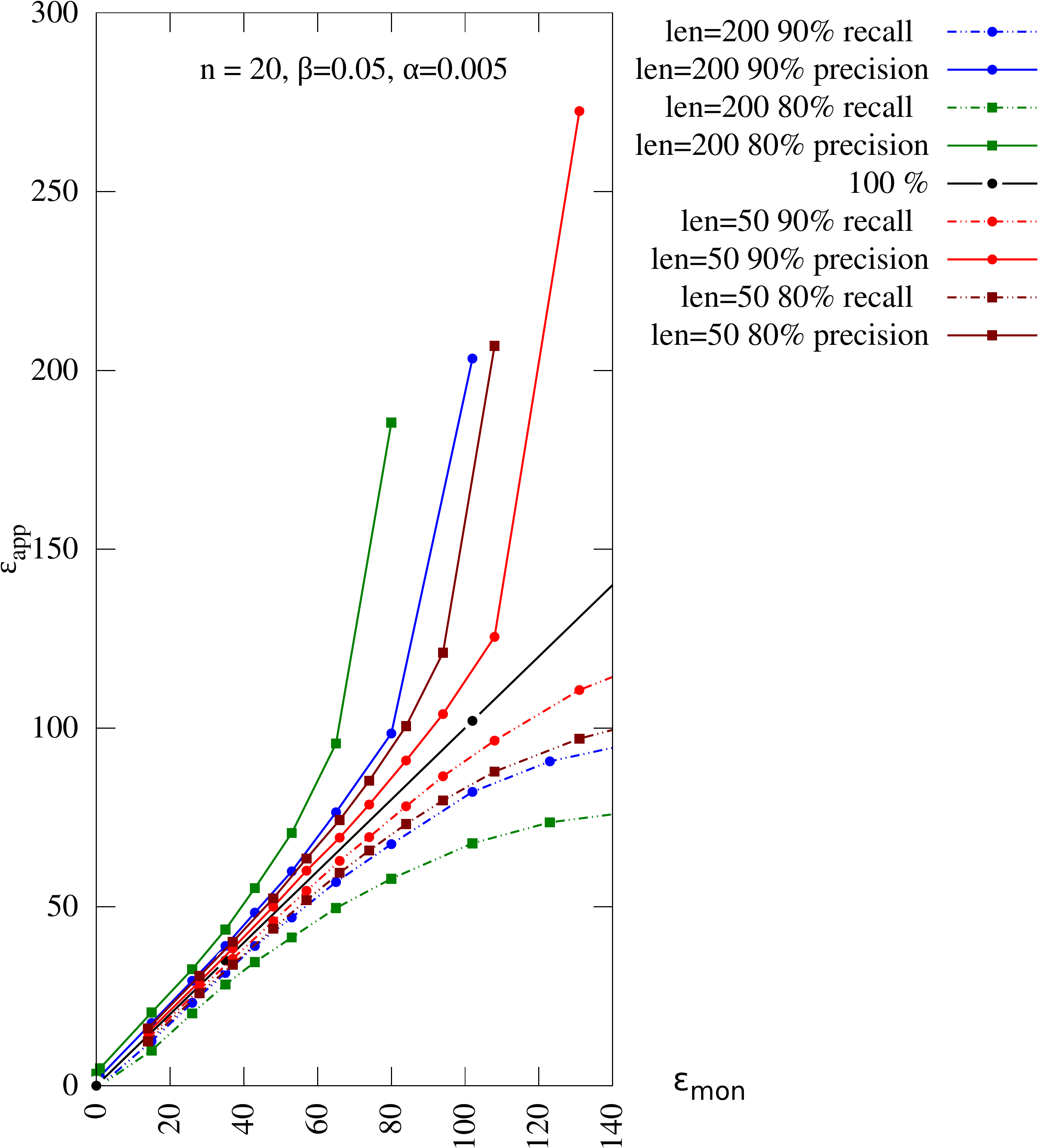}
\end{center}
\vspace{-10pt}
\caption{Precision and Recall when varying interval size}
\label{fig:prec-recall-int-vary-length}
 
\end{wrapfigure}
Figures \ref{fig:prec-recall-point-based}(a) and \ref{fig:prec-recall-point-based}(b) show examples of PR-sensitivity Diagram. This diagram shows that the contour lines of Precision/Recall move closer as $\epsilon_{app}$ gets small. In other words, \textit{the value of Precision and Recall is sensitive when $\epsilon_{app}$ is small}. If $\epsilon_{mon} > \epsilon_{app}$ (respectively, $\epsilon_{mon} < \epsilon_{app}$), then even minute change in $\epsilon_{app}$ can result in large change in Precision (respectively, Recall). In this case, we need to be careful when monitoring in such tight synchronization. For scenarios where we consider intervals where local predicates are true, the results are shown in Figure \ref{fig:prec-recall-int-vary-length}. As anticipated, the longer the intervals, the better precision and recall. 

We describe analytical result. If we want both Precision and Recall to be greater than $\acbound$, the relation between $\epsilon_{mon}$ and $\epsilon_{app}$ needs to satisfy the condition in the next theorem. Observe that this theorem identifies useful range --where both precision and recall are greater than $\acbound$-- of a monitor. The proof is in Appendix. 

\begin{figure*}[!tbhp]
\begin{minipage}{\linewidth}
 \vspace{-20pt}
\begin{center}
\subfigure[]{\includegraphics[width=0.45\columnwidth]{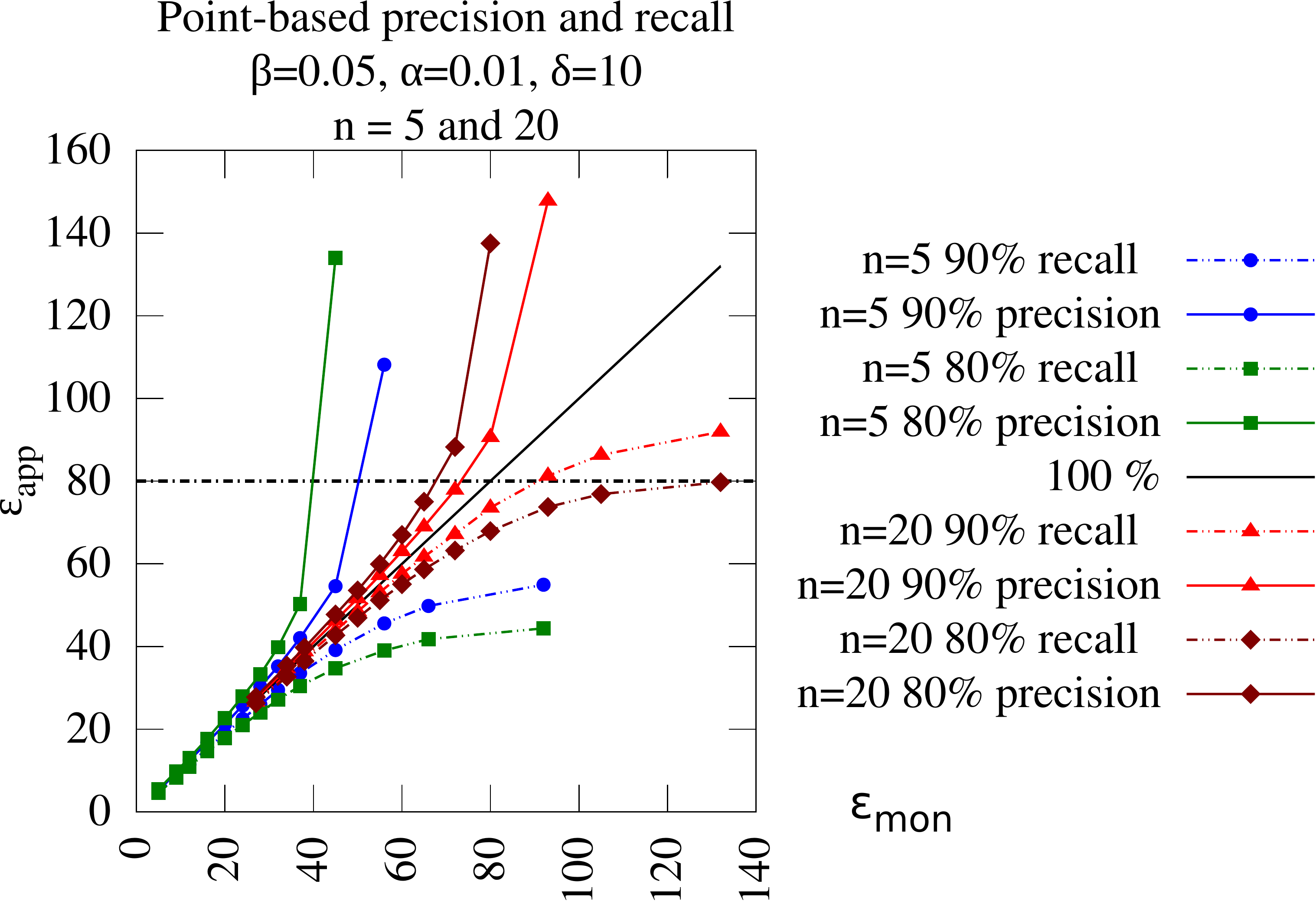}}
\subfigure[]{\includegraphics[width=0.45\columnwidth]{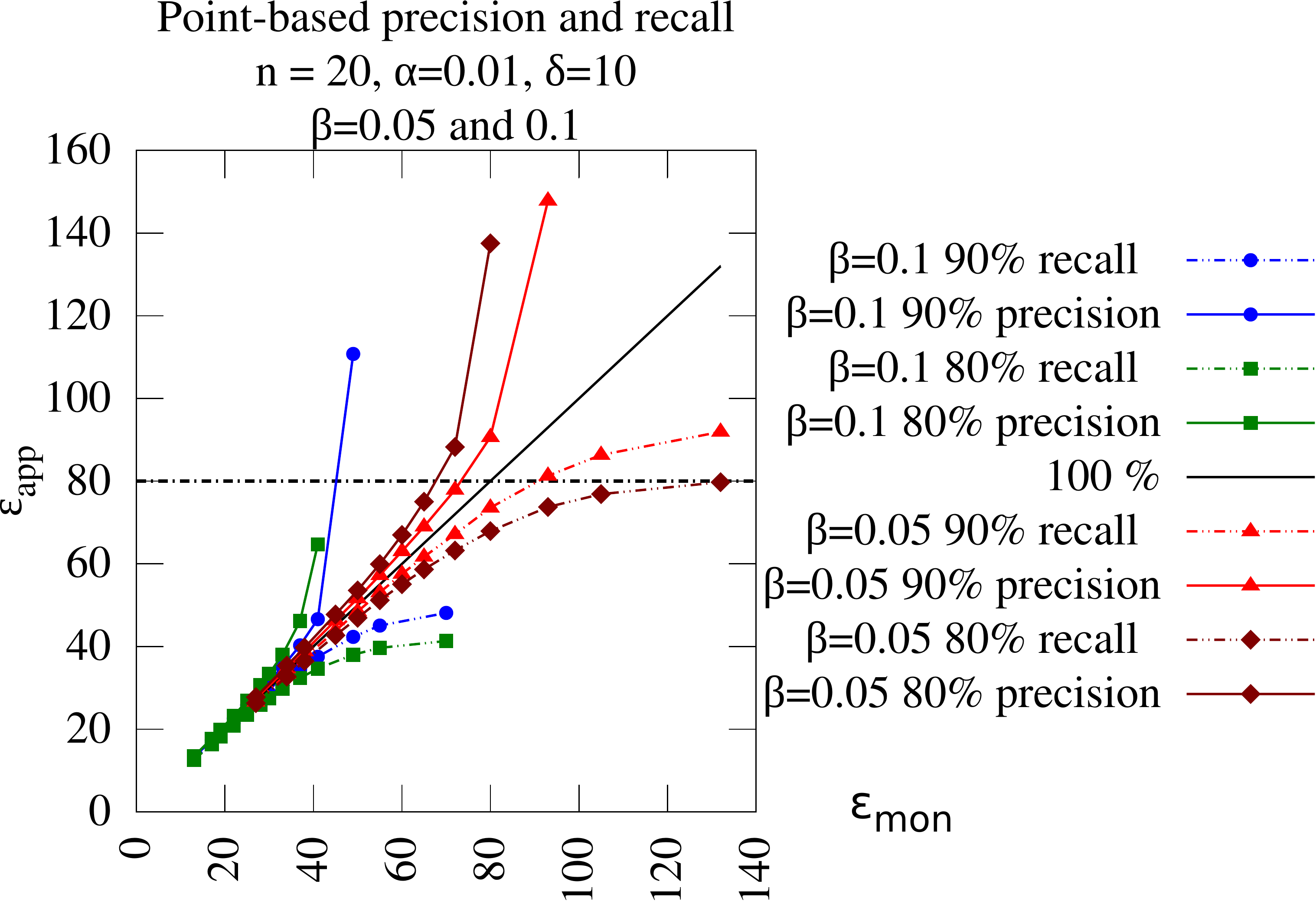}}
\end{center}
 \vspace{-10pt}
\caption{Precision and Recall Diagram in point-based predicate detection.} 
\label{fig:prec-recall-point-based}
\vspace{-15pt}
\end{minipage}
\end{figure*}

\begin{thm}
 For \systype{\epsilon_{app},\delta}-system with $n$ processes where each process has probability $\beta$ to have predicate true, the monitor designed for $\systype{\epsilon_{app}, \delta}$ system has Precision and Recall no less than $\acbound$ if the following condition holds:
$$ \log_{1-\beta}(1-\eta^{\frac{1}{n-1}} g(\beta,\epsilon_{app},\ell)) \leq \epsilon_{mon}+\ell -1 \leq \log_{1-\beta}(1-\eta^{\frac{-1}{n-1}}g(\beta,\epsilon_{app},\ell))$$

Where $$ g(\beta,\epsilon_{app},\ell) = 1-(1-\beta)^{\epsilon_{app}+\ell-1} $$
\label{thm:hvcaccurate}
\end{thm}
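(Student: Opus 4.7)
The plan is to reduce the theorem to a direct algebraic manipulation of the Precision and Recall formulas from Theorem~\ref{thm:hvcpr}, which already give us closed forms in terms of $f(x)=(1-(1-\beta)^{x+\ell-1})^{n-1}$. First I would observe that $f$ is strictly increasing in $x$ (since $0<1-\beta<1$ makes $(1-\beta)^{x+\ell-1}$ decreasing), so the $\min$ in Theorem~\ref{thm:hvcpr} picks out whichever of $\epsilon_{mon}, \epsilon_{app}$ is smaller, and exactly one of Precision, Recall is automatically equal to $1$ in each of the two regimes. This splits the argument cleanly into two symmetric cases.

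In the case $\epsilon_{mon}\le \epsilon_{app}$, Precision is $1$ and the binding constraint is $\text{Recall}=f(\epsilon_{mon})/f(\epsilon_{app})\ge \eta$. Writing this out and taking $(n-1)$-th roots gives
\begin{equation*}
1-(1-\beta)^{\epsilon_{mon}+\ell-1}\ \ge\ \eta^{1/(n-1)}\,g(\beta,\epsilon_{app},\ell),
\end{equation*}
which rearranges to an upper bound on $(1-\beta)^{\epsilon_{mon}+\ell-1}$. Taking $\log_{1-\beta}$ of both sides \emph{reverses} the inequality (this is the step I want to flag as the main place to slip up), yielding the lower bound $\epsilon_{mon}+\ell-1\ \ge\ \log_{1-\beta}\!\bigl(1-\eta^{1/(n-1)}g(\beta,\epsilon_{app},\ell)\bigr)$, which matches the left inequality in the statement.

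The case $\epsilon_{mon}\ge \epsilon_{app}$ is strictly symmetric: now Recall is $1$ and the binding constraint is $\text{Precision}=f(\epsilon_{app})/f(\epsilon_{mon})\ge \eta$. The same algebraic manipulation, with $\eta^{1/(n-1)}$ now appearing on the $f(\epsilon_{mon})$ side and hence ending up as $\eta^{-1/(n-1)}$ on the other, gives the upper bound $\epsilon_{mon}+\ell-1\ \le\ \log_{1-\beta}\!\bigl(1-\eta^{-1/(n-1)}g(\beta,\epsilon_{app},\ell)\bigr)$. Combining the two bounds yields the claimed double inequality; I would close by noting that the interval is non-empty and contains the value $\epsilon_{app}+\ell-1$ (where both Precision and Recall equal $1\ge \eta$), which is a useful sanity check.

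The main obstacle is really bookkeeping rather than a conceptual hurdle: one must track that $\log_{1-\beta}$ is a decreasing function and that $\eta\in(0,1]$ makes $\eta^{1/(n-1)}\le 1\le \eta^{-1/(n-1)}$, which is precisely what guarantees that the log-arguments $1-\eta^{\pm 1/(n-1)}g(\cdot)$ lie in $(0,1)$ so the bounds are well defined. A brief parenthetical remark to this effect, together with the two symmetric case analyses above, should constitute the entire proof.
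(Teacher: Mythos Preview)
Your proposal is correct and follows essentially the same approach as the paper: split into the two cases $\epsilon_{mon}\lessgtr\epsilon_{app}$, invoke the Precision/Recall formulas of Theorem~\ref{thm:hvcpr}, and solve the resulting inequalities for $\epsilon_{mon}$ (the paper's own proof is in fact terser than yours). One small caveat: your closing remark that both log-arguments lie in $(0,1)$ is not guaranteed for the upper bound, since $\eta^{-1/(n-1)}g$ can exceed $1$ when $\epsilon_{app}$ is small---that is exactly the hypersensitive regime identified in Theorem~\ref{thm:hvcphasetransition}, where the upper bound becomes vacuous.
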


Finally, there is a phase transition such that if $\epsilon_{app}$ is too small then the precision and recall are hypersensitive, meaning that a minute change can result in drastically different accuracy. If $\epsilon_{app}$ is beyond phase transition, then the precision and recall are almost non-sensitive as the bound in Theorem \ref{thm:hvcaccurate} grows rapidly. The proof is in Appendix. 

\begin{thm}
 The Precision and Recall due to difference in $\epsilon_{app}$ and $\epsilon_{mon}$ is \textit{hypersensitive} if and only if $$ \epsilon_{app} \leq \log_{1-\beta}(\eta^{\frac{-1}{n-1}}-1) - \ell +1 $$ 

\label{thm:hvcphasetransition}
\end{thm}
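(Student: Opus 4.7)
The plan is to take the admissible band $[L, U]$ for $\epsilon_{mon}$ from Theorem~\ref{thm:hvcaccurate}, whose upper endpoint is $U = \log_{1-\beta}(1 - \eta^{-1/(n-1)}\, g(\beta,\epsilon_{app},\ell)) - (\ell - 1)$. Because $\epsilon_{mon}=\epsilon_{app}$ trivially gives $P=R=1$, $\epsilon_{app}$ lies in this band, and the upper slack $U - \epsilon_{app}$ quantifies how much the monitor may over-estimate synchrony before precision drops below $\eta$. I read \emph{hypersensitive} quantitatively as ``this slack is at most $\epsilon_{app}+\ell-1$'', i.e., $U \le 2\epsilon_{app} + \ell - 1$. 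This matches the informal description in the paper: the slack is monotone in $\epsilon_{app}$ and diverges as $g \uparrow \eta^{1/(n-1)}$, so once it crosses $\epsilon_{app}+\ell-1$ it grows rapidly and the admissible range becomes robust to perturbation.

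The first step is a substitution. Write $r = 1-\beta$, $c = \eta^{-1/(n-1)}$, $a = \epsilon_{app}+\ell-1$, and $t = r^a \in (0,1]$; then $g = 1 - t$ and $1 - c\,g = 1 - c + c\,t$. The hypersensitivity inequality $U \le 2a - (\ell-1)$ becomes $\log_r(1 - c + c\,t) \le 2a$, and since $r < 1$ the monotonicity of $r^{(\cdot)}$ flips this to $1 - c + c\,t \ge t^2$. Rearranging, the condition is the quadratic inequality $t^2 - c\,t + (c-1) \le 0$.

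The second step is pure factoring. The roots of $t^2 - c\,t + (c-1)$ are $t = 1$ and $t = c-1$. Since $0 < \eta < 1$ gives $c > 1$, and the realistic regime $\eta > 2^{-(n-1)}$ gives $c < 2$, the inequality is equivalent to $t \in [c-1,\, 1]$. The upper constraint $t \le 1$ holds automatically, so the condition reduces to $r^a \ge c - 1 = \eta^{-1/(n-1)} - 1$. Applying $\log_{1-\beta}$ (which flips the inequality a last time because $r<1$) and peeling off $\ell - 1$ delivers $\epsilon_{app} \le \log_{1-\beta}(\eta^{-1/(n-1)} - 1) - \ell + 1$. The ``iff'' follows because every step above is an equivalence.

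The main obstacle is not the algebra but the conceptual one: hypersensitivity is given only in informal terms in the paper, and the stated boundary emerges precisely when one formalizes it as ``upper slack at most $\epsilon_{app}+\ell-1$''. A weaker reading such as ``$U$ is finite'' produces a different, strictly larger threshold, namely $g = \eta^{1/(n-1)}$, which by AM--GM ($\eta^{-1/(n-1)} + \eta^{1/(n-1)} > 2$ for $\eta<1$) lies strictly beyond the quadratic threshold derived here. Once the right quantitative reading is committed to, the remainder is the short quadratic calculation sketched above.
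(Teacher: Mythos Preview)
Your algebra is clean and correct: under your reading of ``hypersensitive'' as $U-\epsilon_{app}\le \epsilon_{app}+\ell-1$, the substitution $t=(1-\beta)^{\epsilon_{app}+\ell-1}$ does reduce the condition to the quadratic $(t-1)(t-(c-1))\le 0$ with $c=\eta^{-1/(n-1)}$, and the stated threshold drops out. So as a self-contained derivation from that definition, the argument is fine.

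However, this is not the route the paper takes, and the match is less inevitable than your write-up suggests. The paper defines the phase transition as the point of maximal concavity of the bound in Theorem~\ref{thm:hvcaccurate} viewed as a function of $\epsilon_{app}$, and locates it by setting the \emph{third derivative} of that bound to zero (the same device used for $\epsilon_{p_1},\epsilon_{p_2}$ in Theorem~\ref{thm:uncertainrange}). Concretely, writing the lower bound as $h(a)=\log_{1-\beta}\bigl(1-\eta^{1/(n-1)}+\eta^{1/(n-1)}(1-\beta)^{a}\bigr)$ with $a=\epsilon_{app}+\ell-1$, one finds $h'''(a)=0$ exactly when $(1-\beta)^{a}=\eta^{-1/(n-1)}-1$, which is the claimed threshold. (Applied to the \emph{upper} bound the same calculation has no positive solution, so it is the lower bound that drives the phase transition.) Your approach instead imposes an ad~hoc slack condition on the upper bound; the fact that ``slack $\le \epsilon_{app}+\ell-1$'' lands on the same root $t=c-1$ is a pleasant algebraic coincidence rather than a consequence of your informal justification. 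In short: your derivation is a valid alternative once your quantitative reading of hypersensitivity is granted, but that reading is reverse-engineered from the answer, whereas the paper's calculus criterion (inflection of the slope) is what actually fixes the threshold.
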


\begin{wrapfigure}{r}{0.45\textwidth}
 \vspace{-30pt}
\begin{center}
\includegraphics[width=0.45\textwidth]{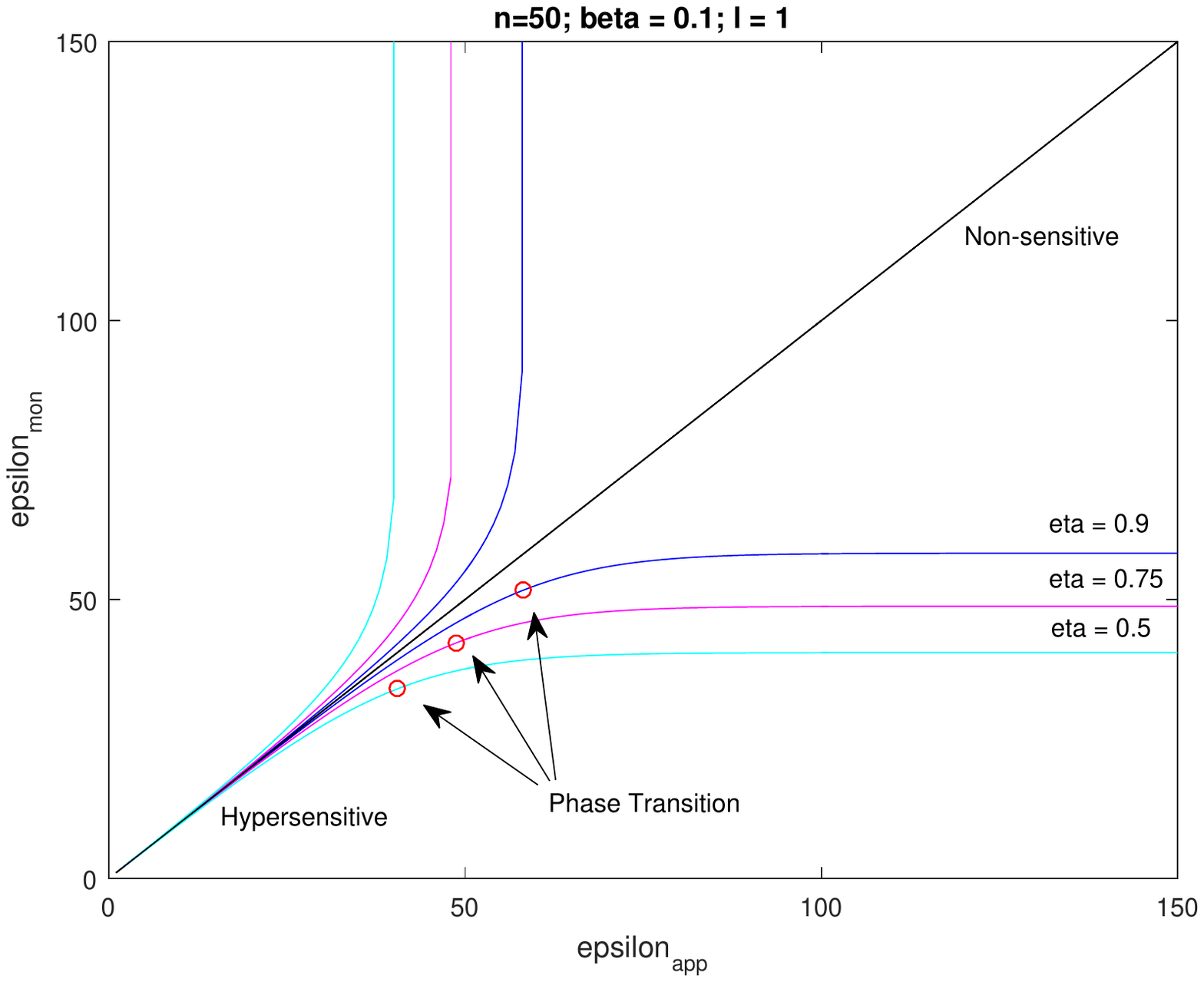}
\end{center}
 \vspace{-20pt}
\caption{PR diagram given by analytical model.}
\label{fig:analyticalPR}
 \vspace{-20pt}
\end{wrapfigure}
Illustration of Theorem \ref{thm:hvcaccurate} and \ref{thm:hvcphasetransition} is shown in Figure \ref{fig:analyticalPR}. Suppose there are 50 processes where each local predicate truthification rate is at every 10ms (thus, $\beta = 0.1$). The bounds are obtained by Theorem \ref{thm:hvcaccurate}. Each red circle highlight the point of maximum concavity, which is the starting point of phase-transition as highlighted in Theorem \ref{thm:hvcphasetransition}. Notice that after phase transition for each value of $\eta$, there is virtually no sensitivity at all as we can deviate from $\epsilon_{app}$ while maintaining high precision and recall. However, if $\epsilon_{mon}$ is less than phase transition, the regions below are hypersensitive. In this case, we cannot obtain both high precision and recall simultaneously. Instead, we can choose to have high precision while sacrificing recall and vice versa.

\section{Related Work} 
\label{sec:related}

Inherent to the model of shared nothing distributed systems is that the nodes execute with limited information about other nodes. This further implies that the system developers/operators also have limited visibility and information about the  system. Monitoring/tracing and predicate detection tools are an important component of large-scale distributed systems as they
provide valuable information to the developers/operators about their system under deployment.

\noindent {\bf Monitoring large-scale web-services and cloud computing systems.} \ 
Dapper~\cite{dapper} is Google's production distributed systems tracing infrastructure. The primary application for Dapper is performance monitoring to identify the sources of latency tails at scale.
Making the system scalable and reducing performance overhead was facilitated by the use of adaptive sampling. The Dapper team found that a sample of just one out of thousands of requests provides sufficient information for many common uses of the tracing data.

Facebook's Mystery Machine~\cite{mysteryMachine} has similar goals to Google's Dapper.
Both also use similar methods, however mystery machine tries to accomplish the task relying on less instrumentation than Google Dapper. The novelty of the mystery machine work is that it tries to infer the component call graph implicitly via mining the logs, where as Google Dapper instrumented each call in a meticulous manner and explicitly obtained the entire call graph.

\noindent {\bf Predicate detection with vector clocks.} \ There has been a lot of previous work on predicate detection (e.g., Marzullo \& Neiger~\cite{marzulloDet} WDAG 1991, Verissimo~\cite{verissimoDet} 1993), using vector clock (VC) timestamped events sorted via happened-before (hb) relationship.
The work in~\cite{marzulloDet} not only defined Definitely and Possibly detection modalities, but also provided algorithms for predicate detection using VC for these modalities. That work also showed that information about clock synchronization (i.e., $\epsilon$) can be translated into additional happened-before constraints and fed in to the predicate detection algorithm to take into account system synchronization behavior and avoiding false positives in only VC-based predicate detection. However, that work did not investigate the rates of false-positives with respect to clock synchronization quality and event occurance rates.

\noindent {\bf Predicate detection with physical clocks and NTP synchronization. } \ In partially synchronized systems, Stoller~\cite{stollerDet} investigated global
predicate detection using NTP clocks, showing that using NTP
synchronized physical clocks provide some benefits over using VC in terms of complexity of predicate detection. The worst case complexity for predicated
detection using hb captured by VC is $\Omega(E^N)$, where $E$ is the maximum
number of events executed by each process, and $N$ is the number of processes.
With some assumptions on the inter-event spacing being larger than time synchronization uncertainty, it is possible to have
worst-case time complexity for physical clock based predicate detection to be
$O(3^N E N^2)$ --- linear in $E$. 

\noindent {\bf Predicate detection under partially synchronous system.}
The duality of the literature on monitoring predicates forces one to make a binary choice before hand: Go with either VC- or physical clock-based timestamping and detection \cite{HLC,dk13ladis}. Hybrid Vector Clocks (HVC) obviate this duality and offer the lowest cost detection of VC and physical clock-based detection at any point. Moreover while VC is of $\Theta(N)$ \cite{vcsize}, thanks to loosely-synchronized clock assumption, it is possible with HVC to keep the sizes of HVC to be a couple entries at each process \cite{HVC}. HVC captures the communications in the timestamps and provides the best of VC and physical clock worlds.

\noindent {\bf Runtime monitoring with imprecise timestamp. } \
Prior runtime-verification approaches assume timestamp to be precise.  However, results from such protocol may not be correct due to uncertainty in underlying system. Recent works account for clocks' imprecision under variety settings. Zhang et al. \cite{Zhang2010} proposes probabilistic approach to deal with imprecise timestamp in data-stream processing. Wang et al. \cite{Wang2011} consider imprecise trace in runtime verification due to  unknown event ordering.  Basin et al. \cite{Basin2014} focus on the real-time temporal logic MTL over a continuous time domain that accounts for imprecise timestamp.  Implicitly, those assumption can be too strong as well. Our result shed light on how sensitive of errors from assumptions we made in the system need to be so that the overall error rate is acceptable. 

\section{Conclusion} 
\label{sec:concl}

We presented analytical and simulation models to capture the effect of the gap between assumptions made by the application and by the monitor. First, we investigated the effect of using a monitor designed for asynchronous systems in partially synchronous systems.
%
%
We find that regarding $\epsilon$, we can partition the system in three regions: lots of fault positive, uncertain range, little false positives.  We find that the uncertain range is hypersensitive, i.e., small changes in $\epsilon$ change the false positive rate substantially. We also showed how these ranges can be computed analytically. In particular, we show how one can compute $\epsilon_{p_1}$ and $\epsilon_{p_2}$ such that the lots of false positive range is $[0..\epsilon_{p_1}]$, \uncertain range is $[\epsilon_{p_1}..\epsilon_{p_2}]$ and little false positive range is $[\epsilon_{p_2}..\infty]$. 
An interesting observation in this context was that the uncertainty range, $\frac{\epsilon_{p_2}-\epsilon_{p_1}}{\epsilon_{p_2}}$, approaches $0$ as the number of processes increase or as $\epsilon_{app}$ grows. 
We also showed that although the analytical results focused on situations where the probability of the local predicate being true is independent, it can also be used in cases where local predicate being true is correlated.

\ifspace
We also considered the case where local predicates that become true do not become false immediately. Instead, they stay true for some interval before becoming false. With both simulation results and analytical model, we showed that the false positive rate in this case decreases. Moreover, the results from these simulation are as expected by the analytical model. 
\fi

We also considered the case where monitoring algorithm assumes that the clocks are synchronized to be within $\epsilon_{mon}$, but the actual clock synchronization of the system/program is $\epsilon_{app}$. One reason this may happen is that application uses clock estimation approaches to identify dynamic value of $\epsilon_{app}$ but this value is not visible to the monitor and, hence, it uses an estimated value. 
We identified possible ranges where the error rate caused by differences in these values is within acceptable limits. Here, we find that for specific ranges of $\epsilon_{mon}$, the algorithm is highly sensitive.  We observed tradeoffs among precision, recall, and sensitivity when  $\epsilon_{app}$ is small, and found that the tradeoff dilutes as $\epsilon_{app}$ gets larger.

\ifspace
We also identify circumstances where the error rate of the monitor (in terms of false positives/negatives) can be kept below a desired bound. 
However, for other values the sensitivity is low, i.e., the error rate remains within acceptable limits even if the difference between $\epsilon_{mon}$ and $\epsilon_{app}$ is large.

Since VC based detection algorithms inherently required $O(n)$-size messages, we focused on the use of hybrid logical clocks (HLC) that combine physical clocks with logical clocks. HLC based detection is useful when the predicate remains true for a large enough time so that we can identify a common logical time (that is also related to physical time) when all local predicates are true. We showed that when the length of the interval is $O(\epsilon log n)$, HLC is expected to detect at least half of the snapshots computed with hybrid vector clocks. 
\fi

There are several future extensions of these results. One extension is to evaluate error probability for more complex predicates in terms of conjunctive predicate detection. Here, if the predicate were $\phi_1\vee \phi_2$ there is a possibility that even if $\phi_1$ is detected incorrectly, $\phi_2$ may still be true causing detection of $\phi_1 \vee \phi_2$. Another extension is to address other types of predicates (e.g., $\phi_1$ leads-to $\phi_2$). 
In our work, we compared the false positives/negatives of monitors that detect a predicate iff it is true under the assumptions made by the monitor. Another future extension is to consider the case for specific instances of monitors which have potential in-built errors introduced for sake of efficiency during monitoring. 

\section{Acknowledgments}
This work is in part sponsored by the National Science Foundation (NSF) project under award number XPS-1533870.
 
\newpage

\bibliographystyle{IEEEtran}
\bibliography{murat,vidhya}

\newpage
\appendix
\noindent\textbf{\Large Appendix} 
\\
\section{Effectiveness of Monitors for Quasi-Synchronous Systems } 
\label{sec:hlc-effectiveness}
In Sections \ref{sec:vc-effectiveness} and \ref{sec:hvc-eps-effectiveness}, we consider the effectiveness of monitors designed for asynchronous system and partially synchronous system. The analysis in Section \ref{sec:hvc-eps-effectiveness} can be instantiated for the case where the system is fully synchronous, i.e., where clock drift is $0$. Although achieving fully synchronous clocks is difficult/impossible in a distributed system, they offer an inherent advantage. Specifically, in asynchronous/partially synchronous systems, to identify whether two events could have happened at the same time, we need to use techniques such as vector clocks \cite{fidge,mattern} that require $O(n)$ space where $n$ is the number of processes. Even though there are attempts to reduce the size \cite{HVC,BoundedVersionVectors}, the worst case size is still $O(n)$. By contrast, in fully synchronous systems, if two events happen at the same time on two different processes, we can conclude that they happened at the same time. In other words, $O(1)$ information suffices with fully synchronous clocks. 

Although fully synchronous {\em physical} clocks are hard to achieve, we can get {\em simulated} clocks that achieve the same property. Our goal in this work is to evaluate effectiveness of such an algorithm in monitoring partially synchronous systems. We denote such systems as quasi-synchronous systems. 

If the underlying system is fully synchronous, we can implement a monitoring algorithm as follows: If all local predicates are true at the same time $t$ then the conjunction of that predicate is true.\footnote{Note that our analysis is based on the property of the monitor and, hence, we do not consider how this can be evaluated most efficiently}. However, if the underlying application is using asynchronous/partially synchronous model then there may be errors. 

Observe that if the underlying system is partially synchronous but the monitor uses the above approach with simulated clocks that provide the desired property, it will suffer from false negatives. In other words, it may miss to find instances where the conjunctive predicate is true. This may happen if the events on two processes are within the uncertainty of clocks but do not happen exactly at the same time. 

With this motivation, we focus on the following problem: Given a $\systype{\epsilon, \delta}$ system that provides a simulated clock that guarantees that two events with equal simulated clock value are concurrent (i.e., do not depend upon each other) what is the rate of false negatives if one monitors using such simulated clocks. 
Since this analysis depends upon how the simulated clock is implemented (although not on how the monitoring itself is implemented given the simulated clock), we identify one such simulated clock and identify its effectiveness.

\subsection{Simulated Clocks: Hybrid Logical Clocks}
\label{sec:simclock}

Hybrid Logical Clocks(HLC) \cite{HLC} are one such instance of simulated clocks. HLC refines both physical clocks and logical clocks \cite{lamport}. In HLC, each event $e$ is timestamped with $\br{pt.e, l.e, c.e}$, where $pt.e$ is the physical time, $l.e$ is the logical time and $c.e$ is a counter. HLC ensures that the logical clock is always {\em close} to the physical clock.
HLC also preserves the property of logical clocks ($e hb f \Rightarrow hlc.e < hlc.f$), where $hlc.e < hlc.f$ iff $(l.e < l.f \vee ((l.e = l.f) \wedge c.e < c.f)))$.

From the above discussion, if $(l.e = l.f) \wedge (c.e = c.f)$ then this implies that $e$ and $f$ are concurrent. Observe that this is exactly the property required of the simulated clocks.

\subsection{Analytical Model for detecting predicates with simulated clocks (HLC)}

In essence, a quasi-synchronous monitor that uses a simulated clock detects a snapshot if and only if there is a point (in time) common to the local intervals associated with all processes. As a result, any snapshot discovered is always $\epsilon_{app}$-snapshot, for any value of $\epsilon_{app}$. In other words, Precision of detection using a simulated clock is always equal to one. So, we focus only on Recall. 

Recall can be computed as probability of $\epsilon_{app}$-snapshot being detected by the quasi-synchronous monitor, which is equivalent to the event that $\epsilon_{app}$-snapshot has common point of intersection.  This can be done by computing the probability of a snapshot being overlapped normalized by the probability of a snapshot being $\epsilon_{app}$-snapshot. To compute the probability of a snapshot being overlapped, we fix the first interval event happened at process 0 at time 0. Then, for the rest of the event, we compute the latest minimum point of first interval among processes after time 0. The intervals have common intersection if and only if the length of this latest minimum point from 0 is shorter than the length of interval $\ell$.  With this idea, we show in the following Theorem. 

\begin{thm}
Quasi-Synchronous Monitoring has Recall :
$$ Recall =  \frac{f(\ell)}{f(\epsilon_{app}+\ell)}$$
Where $$ f(x) = (1-(1-\beta)^x)^{n-1} $$
\label{thm:hlcrecall}
\end{thm}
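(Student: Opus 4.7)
The plan is to decompose Recall as the ratio of two probabilities and evaluate each by conditioning on process $0$'s first local interval. Since a snapshot reported by a quasi-synchronous monitor by construction exhibits a single point $t^\star$ lying in every local interval, every detected snapshot is $\epsilon$-consistent for every $\epsilon\geq 0$. This already gives Precision $=1$ (as noted just before the theorem) and collapses Recall into
$$ Recall \;=\; \frac{\Pr(\text{common overlap of all } n \text{ intervals})}{\Pr(\text{snapshot is } \epsilon_{app}\text{-consistent})}. $$
So the task reduces to showing that the numerator equals $f(\ell)$ and the denominator equals $f(\epsilon_{app}+\ell)$.

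For the denominator I would mimic the computation used in Theorem \ref{thm:vcinterval}. By stationarity, anchor process $0$ so that its first local interval begins at time $0$, occupying $[0,\ell-1]$. For each other process $i\in\{1,\dots,n-1\}$, let $X_i$ denote the waiting time until its next truthification; under the Bernoulli-per-tick and inter-process independence assumptions, $X_i$ is geometric with parameter $\beta$. The snapshot length then equals $\max(\max_i X_i-(\ell-1),0)$, and bounding this by $\epsilon_{app}$ while multiplying the $n-1$ independent geometric tail probabilities reproduces $(1-(1-\beta)^{\epsilon_{app}+\ell})^{n-1} = f(\epsilon_{app}+\ell)$, matching the exponent convention of $f$ adopted in this theorem.

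For the numerator, the very same anchoring makes the $n$ intervals share a common point iff $\max_i X_i \leq \min_i X_i + \ell -1$; since $X_0 = 0$ and all $X_i\geq 0$ we have $\min_i X_i = 0$, so the condition becomes $\max_{i\geq 1} X_i \leq \ell-1$. Independence then gives
$$ \Pr(\text{common overlap}) \;=\; \prod_{i=1}^{n-1}\Pr(X_i\leq \ell-1) \;=\; (1-(1-\beta)^{\ell})^{\,n-1} \;=\; f(\ell), $$
and dividing by the denominator delivers the stated formula.

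The step I expect to be the main obstacle is justifying the anchoring: Recall is a conditional probability on the event ``the snapshot is $\epsilon_{app}$-consistent'', so one must verify that the same reference frame --- the instant process $0$ is first truthified after time $0$ --- can legitimately be used for both the numerator and the denominator, and that the $X_i$'s remain i.i.d.\ geometric after this conditioning, so that their quotient is genuinely the desired conditional probability rather than an unrelated ratio. A secondary, purely bookkeeping subtlety is reconciling the ``$x+\ell-1$'' convention of Theorem \ref{thm:vcinterval} with the ``$x$'' convention of $f$ in the present statement; this is a harmless shift that does not affect the ratio and needs only to be made consistent in one place.
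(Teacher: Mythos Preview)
Your proposal is correct and follows essentially the same route as the paper: the paragraph preceding the theorem anchors process~$0$'s first interval at time~$0$, characterizes common overlap by the event that the maximum start time among the remaining processes is within~$\ell$, and normalizes by the $\epsilon_{app}$-consistency probability obtained in the same frame. The two concerns you flag---the legitimacy of using the same anchored frame for numerator and denominator, and the off-by-one discrepancy between the $f$ used here and the $\phi(\cdot,n,\beta,\ell)$ of Theorem~\ref{thm:vcinterval}---are exactly the points the paper leaves implicit.
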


Given that we can compute Recall of quasi-synchronous monitoring, we also want to know when majority of true snapshots are found by a quasi-synchronous monitor. That is, given an application configuration, we want to compute the necessary condition of the length of the intervals such that Recall is at least 0.5. 

\begin{thm} 
With a quasi-synchronous monitor, Recall is at least 0.5 if and only if the following inequality holds:
$$\ell \geq \log_{1-\beta}( \frac{2^{1/(n-1)} - 1 }{ 2^{1/(n-1)} - (1-\beta)^{\epsilon_{app}  }} ) $$
\end{thm}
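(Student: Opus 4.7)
The plan is to derive the inequality as a direct algebraic manipulation of the Recall formula from Theorem~\ref{thm:hlcrecall}. The condition $\mathit{Recall} \geq 0.5$ becomes
$$\frac{(1-(1-\beta)^\ell)^{n-1}}{(1-(1-\beta)^{\epsilon_{app}+\ell})^{n-1}} \;\geq\; \frac{1}{2}.$$
Since both the numerator and denominator are positive and the exponent $n-1$ is a positive integer, I would take the $(n-1)$-th root of both sides to obtain the equivalent (scalar) condition
$$\frac{1-(1-\beta)^\ell}{1-(1-\beta)^{\epsilon_{app}+\ell}} \;\geq\; 2^{-1/(n-1)}.$$

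Next I would introduce the abbreviations $u = (1-\beta)^\ell$ and $v = (1-\beta)^{\epsilon_{app}}$, so that $(1-\beta)^{\epsilon_{app}+\ell} = uv$, and write $A = 2^{-1/(n-1)}$. The inequality becomes $1 - u \geq A(1 - uv)$, which I would rearrange to isolate $u$: collecting the $u$-terms gives $u(1 - Av) \leq 1 - A$. Noting that $1 - Av > 0$ (since $A, v \in (0,1)$), I divide to obtain $u \leq (1-A)/(1-Av)$. Multiplying numerator and denominator by $2^{1/(n-1)}$ produces
$$(1-\beta)^\ell \;\leq\; \frac{2^{1/(n-1)} - 1}{2^{1/(n-1)} - (1-\beta)^{\epsilon_{app}}}.$$
Finally, since $0 < 1-\beta < 1$, the function $\log_{1-\beta}$ is strictly decreasing, so applying it to both sides flips the inequality and yields exactly the bound stated in the theorem. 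The ``if and only if'' follows from the fact that every step above is an equivalence.

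The only subtle point, and the one I would emphasize in the writeup, is the sign reversal when applying $\log_{1-\beta}$: it is easy to quote the result with the inequality pointing the wrong way. Beyond that, one should verify that the right-hand side of the final bound is strictly positive (ensuring the logarithm is defined), which follows because $2^{1/(n-1)} > 1 \geq (1-\beta)^{\epsilon_{app}}$ for $n \geq 2$ and $\beta \in (0,1)$. No additional probabilistic machinery is needed beyond Theorem~\ref{thm:hlcrecall}, so I expect the main ``obstacle'' to be purely bookkeeping: presenting the algebra cleanly and justifying each equivalence so that the biconditional is transparent.
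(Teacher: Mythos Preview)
Your proposal is correct and follows essentially the same approach as the paper: the paper's proof is merely a sketch that writes down the inequality $\frac{(1-(1-\beta)^{\ell})^{n-1}}{(1-(1-\beta)^{\epsilon_{app}+\ell})^{n-1}} \geq 0.5$ and says the result follows by derivation, whereas you carry out that derivation explicitly (root, substitution, rearrangement, and the sign flip under $\log_{1-\beta}$). Your additional remarks on well-definedness and the biconditional are sound and make the argument more complete than the paper's sketch.
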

\begin{proof}[Proof Sketch]
Using Recall in Theorem \ref{thm:hlcrecall}, the result follows from   derivation of the following inequality: 

$$  \frac{(1-(1-\beta)^{\ell})^{n-1} }{(1-(1-\beta)^{\epsilon_{app}+\ell})^{n-1}} \geq 0.5 $$
\end{proof}

\subsection{Simulation results for detecting predicates with simulated clocks}

Using a quasi-synchronous monitor in point based predicates is not expected to be effective. Hence, we focus on its use only with interval-based predicates. 

\begin{figure}
\centering
\includegraphics[width=0.5\textwidth]{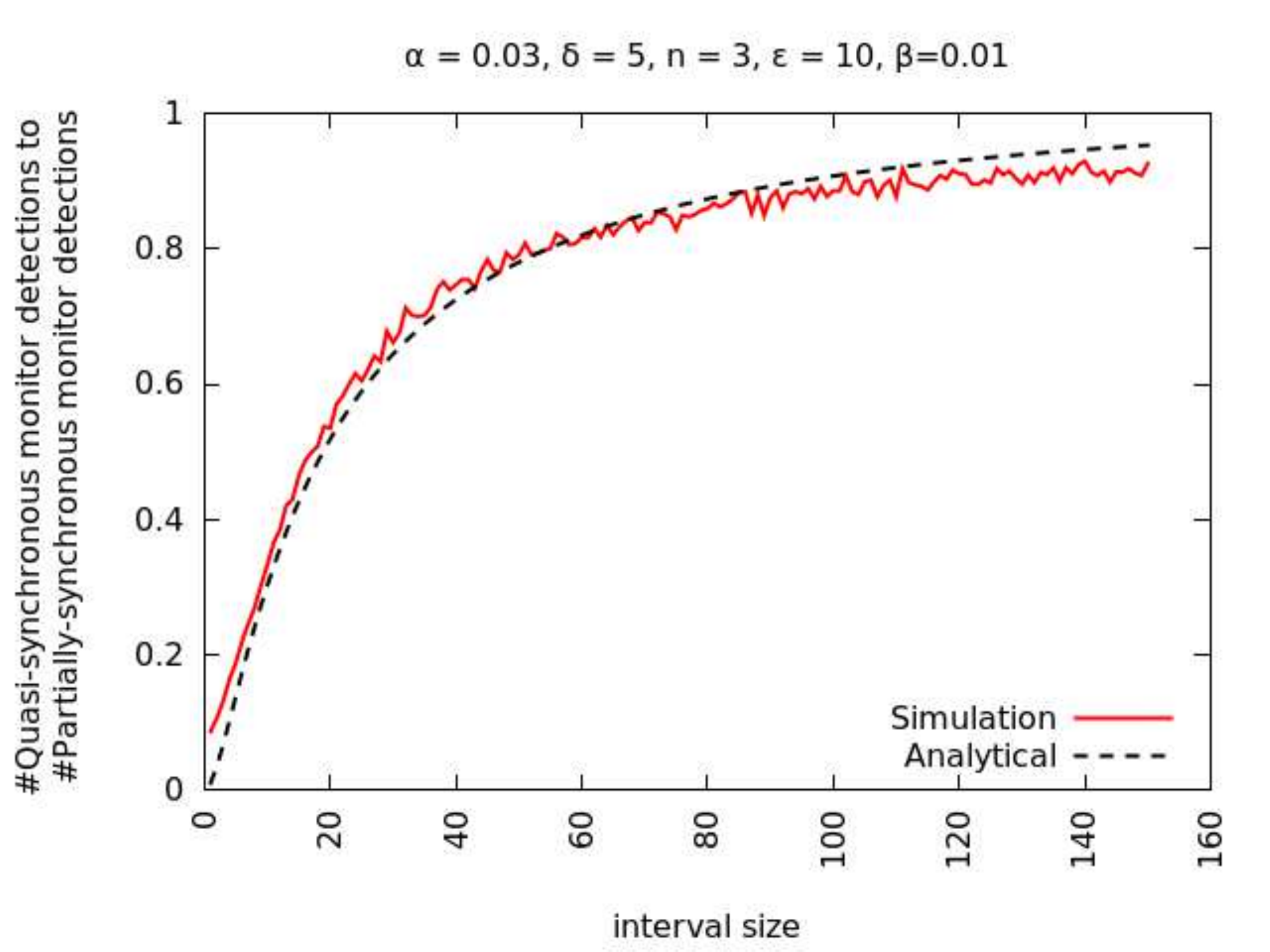}
\caption{Analytical Model vs. Simulation Results: No. of snapshots detected by Quasi-Synchronous Monitor to No. of snapshots detected by Partially Synchronous Monitor}
\label{fig:AnalysisVsSimulation}
\end{figure}

{\bf Effect of Interval length. } \
In Figure \ref{fig:AnalysisVsSimulation}, we present our results for the case where we vary the length of the interval from 1-150. In this figure, if interval length is at least 20 then the probability of finding the consistent snapshot with the quasi-synchronous monitor is 50\% of that computed with partially synchronous monitor. 
%
%
From this observation, we find that for 3 processes when the interval length is approximately $2\epsilon$, the quasi-synchronous monitor is able to detect at least half of the snapshots computed by partially synchronous monitor. Since partially synchronous monitors need vector clocks to detect concurrent events, whereas quasi-synchronous monitors are based on the use of scalar clocks, deploying a quasi-synchronous monitor is expected to be simpler than partially synchronous monitor. We can observe that even with such scalar clocks, quasi-synchronous monitors are able to detect about half of the snapshots identified by partially synchronous monitors. 

Moreover, the analytical model is very close to the simulation model. Hence, one can utilize the analytical model to identify the expected false negative rate to determine whether a quasi-synchronous monitor should be deployed. 

\subsection{Detecting Partial Global Predicates With Quasi-Synchronous Monitors}

In the earlier discussion, we considered the case where the predicate being monitored is a conjunctive predicate involving all processes. Now, we consider the case where the predicate being detected involves only a subset of $p$ processes, $p \leq n$. Instances of such protocols include scenarios where the monitor needs to check if two (given) processes have a token at the same time. 

We analyze the performance of quasi-synchronous monitors in detecting such predicates. We expect quasi-synchronous monitors to perform better with decrease in the size of the subset of processes in the underlying system for which the predicate is evaluated for truthfulness. 

We simulated this by monitoring a system of $n$ processes using a quasi-synchronous monitor and partially synchronous monitor simultaneously, to detect occurrences of partial predicates i.e. if predicate is true for $p$ ($<=n$) processes. When $p=n$ the quasi-synchronous monitor detected about half the number of global predicates detected by the partially synchronous monitor as expected. With decrease in the value of $p$ the number of occurrences detected by the quasi-synchronous monitor started to approach the number of such partial predicates detected by the partially synchronous monitor. From Table \ref{table:HLCpOfn}, we observe that the fraction of predicates detected by HLC increases when the value of $p$ becomes small. 

\vspace*{-\baselineskip}
\begin{table}
\centering
\begin{tabular}{ |c | c| }
  \hline
	p & Fraction of snapshots detected \\
& by quasi-synchronous monitor\\ \hline
	2  & 0.79\\
	3 & 0.68\\
	4  & 0.60\\
	5 & 0.42\\
  \hline  
\end{tabular}
\caption{Partial predicate detection by quasi-synchronous monitor with 5 processes}
\label{table:HLCpOfn}
\end{table}
\vspace*{-\baselineskip}

\section{Omitted Proofs}

\subsection{Proof of Theorem \ref{thm:vcinterval}}
\label{sec:vcinterval-proof}

We first show, $\phi( \epsilon,n,\beta)$,  probability of $hb$-consistent snapshot being $\epsilon$-snapshot for point-based predicate. This is equivalent to computing distribution of $L(c)$ where each interval has length 1. For point-based predicate, the result is as follows and its derivation is provided as a proof. For convenience, we denote $\phi( x ,n,\beta)$ as $g(x)$ representing the length of a point-based predicate snapshot. 

\begin{lem}
For point-based predicate, let $c$ be an $hb$-consistent snapshot. 
The probability of $c$  being  $\epsilon$-consistent (true positive rate) is $\phi( \epsilon,n,\beta) = (1-(1-\beta)^{\epsilon})^{n-1}$
\label{lem:tpr-hbcut}
\end{lem}
\begin{proof}
We first fix one process to have true predicate at time 0.  We define random variable $x_i$ as the first time after time 0 that the predicate is true at process $i$, $ 2 \leq i \leq n$. So, $x_i$ has geometric distribution with parameter $\beta$, i.e., $P(x_i \leq \epsilon) = 1-(1-\beta)^{\epsilon}$. The cut is $\epsilon$-consistent if all points are not beyond $\epsilon$. That is, 
\begin{eqnarray*}
 P(\max_{1 \leq i \leq n} x_i \leq \epsilon) &=& \prod_{i=1}^{n-1} P(x_i \leq \epsilon) \\
& = &(1-(1-\beta)^{\epsilon})^{n-1}
\end{eqnarray*}
\end{proof}

To complete the proof, we calculate probability of $hb$-consistent snapshot being $\epsilon$-snapshot for interval-based predicate of length $\ell$. Using Lemma \ref{lem:tpr-hbcut}, we can obtain the following result.  For convenience, we denote $\phi( x ,n,\beta,\ell)$ as $f(x)$ representing the length of a interval-based predicate snapshot. 

 We simply calculate $P(L(c) \leq \epsilon)$. In this case, $L(c) = \max(\max_i(\{a_i\})-\min_i(\{b_i\})),0)$ by definition of length of snapshot $c$, $L(c)$.  Hence, 
 \begin{eqnarray*}
 P(L(c) \leq \epsilon) &=& P(\max(\max_i(\{a_i\})-\min_i(\{b_i\}),0) \leq \epsilon) \\
   & =&  P(\max_i(\{a_i\}) \leq \epsilon+\ell-1)  \\
	 & =&  g(x+\ell-1)  
\end{eqnarray*}
 The result follows since $ \phi( \epsilon,n,\beta, \ell) = P(L(c) \leq \epsilon)$.

\subsection{Proof of Theorem \ref{thm:uncertainrange}}

The result follows immediately as a Corollary of the following Lemma.
\begin{lem}
For $n > 1$, two inflection points of slopes are at $$ \epsilon_{p1} = \frac{\ln(\frac{3n-4+ \sqrt{5n^2-16n+12}}{2(n-1)^2})}{\ln(1-\beta)}$$ 
and 
$$ \epsilon_{p2} = \frac{\ln(\frac{3n-4-\sqrt{5n^2-16n+12} }{2(n-1)^2})}{\ln(1-\beta)}$$
\label{lem:two}
\end{lem}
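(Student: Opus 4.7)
\textbf{Proof proposal for Lemma~\ref{lem:two}.}
The plan is to compute $\partial^3 \phi/\partial \epsilon^3$ in closed form, set it to zero, and show that the resulting equation reduces to a quadratic in the substituted variable $u = (1-\beta)^\epsilon$. Concretely, I would introduce $u = (1-\beta)^\epsilon$ and $v = 1-u$, so that $\phi = v^{n-1}$ and $du/d\epsilon = cu$ where $c = \ln(1-\beta) < 0$. Then two applications of the chain/product rule give
$$\frac{\partial \phi}{\partial \epsilon} = -c(n-1)\,u\,v^{n-2}, \qquad \frac{\partial^2 \phi}{\partial \epsilon^2} = -c^2(n-1)\,u\,v^{n-3}\bigl[1 - (n-1)u\bigr].$$
An inflection point of the slope $\partial \phi/\partial \epsilon$ is exactly a zero of $\partial^3\phi/\partial \epsilon^3$, so I would differentiate once more.

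The next step is the main computational step: differentiate $u\,v^{n-3}[1-(n-1)u]$ and factor out $c\,u\,v^{n-4}$ (which is nonzero in the relevant range $u\in(0,1)$). After collecting terms in $u$, the bracket that remains is
$$v - (n-3)u - 2(n-1)uv + (n-1)(n-3)u^{2},$$
and substituting $v = 1-u$ collapses this to the quadratic
$$(n-1)^{2}\,u^{2} \;-\; (3n-4)\,u \;+\; 1 \;=\; 0.$$
Its discriminant works out to $(3n-4)^{2} - 4(n-1)^{2} = 5n^{2} - 16n + 12$, yielding
$$u = \frac{3n-4 \pm \sqrt{5n^{2}-16n+12}}{2(n-1)^{2}}.$$
Inverting the substitution via $\epsilon = \ln u / \ln(1-\beta)$ produces exactly the two expressions stated in the lemma.

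The remaining bookkeeping is to verify that both roots lie in the open interval $(0,1)$ so that they correspond to real values of $\epsilon > 0$, and to identify which root is $\epsilon_{p_1}$ and which is $\epsilon_{p_2}$. For positivity of the discriminant and of both roots, I would note that $5n^{2}-16n+12 \ge 0$ for all $n\ge 2$, that the product of the roots is $1/(n-1)^{2} \in (0,1)$, and that their sum $(3n-4)/(n-1)^{2}$ is also in $(0,1)$ for $n\ge 2$, so both roots are in $(0,1)$. Because $\ln(1-\beta)<0$, the logarithm is order-reversing: the larger root in $u$ yields the smaller $\epsilon$, which is $\epsilon_{p_1}$ (corresponding to the $+$ sign inside the log), and the smaller root yields $\epsilon_{p_2}$ (the $-$ sign).

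\textbf{Main obstacle.} The hard part is purely the algebra of collapsing the third-derivative expression into the clean quadratic above; the derivative expansions generate several terms with differing powers of $u$ and $v$, and the key cancellation $1 + (n-3) + 2(n-1) = 3n-4$ together with $2(n-1) + (n-1)(n-3) = (n-1)^{2}$ is what makes the final quadratic have the advertised coefficients. Once this cancellation is verified, the rest (solving the quadratic and taking logarithms) is routine.
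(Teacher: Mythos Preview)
Your approach is exactly the one the paper sketches (``solve a system of equations of the third order derivative of $\phi$''), and your detailed computation is correct: the substitution $u=(1-\beta)^\epsilon$ cleanly reduces $\partial^3\phi/\partial\epsilon^3=0$ to the quadratic $(n-1)^2u^2-(3n-4)u+1=0$, whose roots give the stated $\epsilon_{p_1},\epsilon_{p_2}$.

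One small bookkeeping slip: the sum of the roots $(3n-4)/(n-1)^2$ is \emph{not} in $(0,1)$ for $n=2,3$ (it equals $2$ and $5/4$ respectively), so that argument for both roots lying in $(0,1)$ fails as stated. A cleaner check is that the quadratic evaluates to $(n-2)(n-3)$ at $u=1$ and to $1$ at $u=0$; for $n\ge 4$ both endpoint values are positive while the discriminant is positive and the vertex lies in $(0,1)$, forcing both roots into $(0,1)$. For $n=2,3$ one root is exactly $u=1$ (i.e.\ $\epsilon=0$), a degenerate boundary case.
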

\begin{proof} 
Solve a system of equations of the third order derivative of $\phi( \epsilon,n,\beta)$ with respect to $\epsilon$ by definition of inflection points of slopes.
\end{proof}

To complete the proof, we take ratio from Lemma \ref{lem:two}, and compute the limit as $n \rightarrow \infty$.

\subsection{Proof of Theorem \ref{thm:hvcpr}}
Precision can be calculated as follows. If $\epsilon_{mon} < \epsilon_{app}$, then Precision is 1 since all $\epsilon_{mon}$-snapshots are $\epsilon_{app}$-snapshots, but not vice versa. If $\epsilon_{mon} > \epsilon_{app}$, then Precision can be calculated as probability of $\epsilon_{mon}$-snapshot being $\epsilon_{app}$-snapshot. In other words, Precision is probability of a snapshot has length of $\epsilon_{app}$ given that the snapshot is of length $\epsilon_{mon}$.  Therefore, let $L(c)$ be length of snapshot $c$; Precision is given by

\begin{eqnarray*}
 Precision&=& P(L(c) \leq \epsilon_{app} | L(c) \leq \epsilon_{mon}) \\
 &=& \frac{P(L(c) \leq \epsilon_{app} \textrm{ and } L(c)) \leq \epsilon_{mon} }{P(L(c) \leq \epsilon_{mon})} \\
 &=& \frac{P(L(c) \leq \min(\epsilon_{app},\epsilon_{mon}))}{P(L(c) \leq \epsilon_{mon})} \\
 &=& \frac{f(\min(\epsilon_{app},\epsilon_{mon}))}{f(\epsilon_{mon})} 
\end{eqnarray*}

Similarly, Recall is probability of a snapshot being of length $\epsilon_{mon}$ given that the snapshot is of length $\epsilon_{app}$.  By straightforward algebra,  we have the following result.

$$ Recall = \frac{f(\min(\epsilon_{app},\epsilon_{mon}))}{f(\epsilon_{app})}  $$

\subsection{Proof of Theorem \ref{thm:hvcaccurate}}

We fix $\epsilon_{app}$ and then we bound the target $\epsilon_{mon}$. If $\epsilon_{app} < \epsilon_{mon}$, then by Theorem \ref{thm:hvcpr} Precision is $$ (\frac{1-(1-\beta)^{\epsilon_{app}+\ell-1}}{1-(1-\beta)^{\epsilon_{mon}+\ell-1}})^{n-1}$$ We want precision to be at least $\eta$ where $0 \leq \eta \leq 1$. We establish an inequality: $$  (\frac{1-(1-\beta)^{\epsilon_{app}+\ell-1}}{1-(1-\beta)^{\epsilon_{mon}+\ell-1}})^{n-1} \geq \eta $$ 
The results follow from solving the inequality for both Precision and Recall cases. 

\subsection{Proof of Theorem \ref{thm:hvcphasetransition}}

We use the same technique as point of inflections of slopes to obtain the phase transition. The phase transition is defined as the point that maximizes concavity or convexity. We obtain by solving an equation given by third order derivative of the bound in Theorem \ref{thm:hvcaccurate} equal to zero. We use Computer Algebra, WolframAlpha, to derive this expression.

\end{document}